\def\mdseries@tt{m}             %1
\DeclareMathOperator{\freshv}{fresh}
\newcommand{\ol}[1]{\overline{#1}}
\newcommand{\many}[2]{\overline{#2}^{#1}}
\newcommand{\kwd}[1]{\ensuremath{\mathbf{#1}\xspace}}
\newcommand{\scythe}{\textsc{Scythe}\xspace}
\begin{document}
\iftoggle{TR}{\sloppy}{}

%% Title information
\title{Constraint-Based Type-Directed Program Synthesis}
\iftoggle{TR}{%
  \subtitle{Extended Report}
  \subtitlenote{This is an extended version of a paper that originally appeared in TyDe 2019~\cite{osera_constraint-based_2019}.}
}{}
%% [Short Title] is optional;
%% when present, will be used in
%% header instead of Full Title.
% \titlenote{with title note}             %% \titlenote is optional;
                                        %% can be repeated if necessary;
                                        %% contents suppressed with 'anonymous'
%% \subtitle is optional
% \subtitlenote{with subtitle note}       %% \subtitlenote is optional;
                                        %% can be repeated if necessary;
                                        %% contents suppressed with 'anonymous'

%% Author information
%% Contents and number of authors suppressed with 'anonymous'.
%% Each author should be introduced by \author, followed by
%% \authornote (optional), \orcid (optional), \affiliation, and
%% \email.
%% An author may have multiple affiliations and/or emails; repeat the
%% appropriate command.
%% Many elements are not rendered, but should be provided for metadata
%% extraction tools.

%% Author with single affiliation.
\author{Peter-Michael Osera}
% \authornote{with author1 note}          %% \authornote is optional;
                                        %% can be repeated if necessary
\orcid{nnnn-nnnn-nnnn-nnnn}             %% \orcid is optional
\affiliation{%
  %\position{Position1}
  \department{Department of Computer Science} %% department is recommended
  \institution{Grinnell College}        %% \institution is required
  %\streetaddress{Street1 Address1}
  %\city{City1}
  %\state{State1}
  %\postcode{Post-Code1}
  \country{United States of America}    %% \country is recommended
}
\email{osera@cs.grinnell.edu}           %% \email is recommended

%% Abstract
%% Note: \begin{abstract}...\end{abstract} environment must come
%% before \maketitle command
\begin{abstract}
  We explore an approach to type-directed program synthesis rooted in constraint-based type inference techniques.  By doing this, we aim to more efficiently synthesize polymorphic code while also tackling advanced typing features such as GADTs that build upon polymorphism.  Along the way, we also present an implementation of these techniques in \scythe, a prototype live, type-directed programming tool for the Haskell programming language and reflect on our initial experience with the tool.
\end{abstract}

%% 2012 ACM Computing Classification System (CSS) concepts
%% Generate at 'http://dl.acm.org/ccs/ccs.cfm'.
\begin{CCSXML}
<ccs2012>
<concept>
<concept_id>10011007.10011006.10011039.10011311</concept_id>
<concept_desc>Software and its engineering~Semantics</concept_desc>
<concept_significance>500</concept_significance>
</concept>
<concept>
<concept_id>10011007.10011074.10011092.10011782</concept_id>
<concept_desc>Software and its engineering~Automatic programming</concept_desc>
<concept_significance>500</concept_significance>
</concept>
<concept>
<concept_id>10003752.10003790.10002990</concept_id>
<concept_desc>Theory of computation~Logic and verification</concept_desc>
<concept_significance>300</concept_significance>
</concept>
</ccs2012>
\end{CCSXML}

\ccsdesc[500]{Software and its engineering~Semantics}
\ccsdesc[500]{Software and its engineering~Automatic programming}
\ccsdesc[300]{Theory of computation~Logic and verification}
%% End of generated code

%% Keywords
%% comma separated list
\keywords{Functional Programming, Program Synthesis, Type Inference, Type Theory}  %% \keywords are mandatory in final camera-ready submission

%% \maketitle
%% Note: \maketitle command must come after title commands, author
%% commands, abstract environment, Computing Classification System
%% environment and commands, and keywords command.
\maketitle

\section{Introduction}%
\label{sec:introduction}
Functional programmers frequently comment how richly-typed functional programs just ``write themselves''.\footnote{Once you get over the complexity of the types!}
For example, consider writing down a function that obeys the type:
\begin{center}
  \begin{minipage}{0.25\textwidth}
    \begin{haskellcode}
f :: a -> Maybe a -> a
    \end{haskellcode}
  \end{minipage}
\end{center}
Because the return type of the function, \hask{a}, is polymorphic we can only produce a value from two sources:
\begin{enumerate}
  \item The first argument to the function (of type \hask{a}).
  \item The result of pattern matching on the second argument (in the \hask{Just} case, its argument will have type $a$).
\end{enumerate}
These restrictions highly constrain the set of valid programs that will typecheck, giving the impression that the function writes itself as long as the programmer can navigate the type system appropriately.
To do this, they must systematically check the context to see what program elements are relevant to their final goal and try to put them together into a complete program.
However, such navigation is usually mechanical and tedious in nature.
It would be preferable if a tool automated some or all of this \emph{type-directed development} process.

Such luxuries are part of the promise of type-directed program synthesis tools.
Program synthesis is the automatic generation of programs from specification.
In this particular case, the specification of our program is its rich type coupled with auxiliary information, \eg, concrete examples of intended behavior.

\subsection{From Typechecking to Program Synthesis}

Type-directed synthesis techniques search the space of possible programs primarily through a reinterpretation of the programming language's type system~\cite{osera_type-and-example-directed_2015, frankle_example-directed_2016, polikarpova_program_2016}.
Traditionally, type systems are specified by defining a \emph{relation} between a context, expression, and type, \eg, $\Gamma \vdash e : \tau$ declares that $e$ has type $\tau$ under context $\Gamma$.
From this specification, we would like to extract a typechecking algorithm for the language.
However, because relations do not distinguish between inputs and outputs, it is sometimes not clear how to extract such an algorithm.
Bi-directional typechecking systems~\cite{pierce_local_2000} alleviate these concerns by making it explicit which components of the system are inputs and outputs of the system, typically by distinguishing the cases where we \emph{check} that a term has a type from the cases where we \emph{infer} that a term has a type.
In the former case, the type acts as an \emph{input} to the system where in the latter case, it is an \emph{output}.

In both cases, the term being typechecked serves as an input to the typechecking algorithm.
However, with type-directed program synthesis, we instead view the term as an output and the type as an input.
For example, consider the standard function application rule found in most type systems:
\[
\inferrule
  {%
    \Gamma \vdash e_1 : \tau_1 \rightarrow \tau_2 \\
    \Gamma \vdash e_2 : \tau_1
  }
  { \Gamma \vdash e_1\;e_2 : \tau_2}
\]
While we can observe that the function application should be typed at the result type of the function $e_1$, it isn't clear which parts of the relations are inputs and outputs and the order in which the checks ought to be carried out.
A bidirectional interpretation of this rule makes the inputs and outputs explicit:
\[
\inferrule
  {%
    \Gamma \vdash e_1 \Rightarrow \tau_1 \rightarrow \tau_2 \\
    \Gamma \vdash e_2 \Leftarrow \tau_1
  }
  { \Gamma \vdash e_1\;e_2 \Rightarrow \tau_1 \rightarrow \tau_2 }
\]
Here, the relation $\Gamma \vdash e_1 \Rightarrow \tau_1 \rightarrow \tau_2$ means that we infer that the type of $e_1$ is $\tau_1 \rightarrow \tau_2$.
The relation $\Gamma \vdash e_2 \Leftarrow \tau_1$ means that we check that the type of $e_2$ is indeed $\tau_1$.
With this, the procedure for type checking a function application is clear: infer a function type $\tau_1 \rightarrow \tau_2$ for $e_1$ and then check that $e_2$ has that input type $\tau_1$; the type of the overall application is then inferred to be the output type $\tau_2$.

With type-directed synthesis, we turn type\-checking into \emph{term generation} by reinterpreting the inputs and outputs of the typechecking relation.
\[
\inferrule
  {%
    \Gamma \vdash \tau_1 \rightarrow \tau_2 \Rightarrow e_1 \\
    \Gamma \vdash \tau_1 \Rightarrow e_2
  }
  { \Gamma \vdash \tau_2 \Rightarrow e_1\;e_2}
\]
The relation $\Gamma \vdash \tau \Rightarrow e$ asserts that whenever we have a \emph{goal type} $\tau$ we can generate a term $e$ of that type.
Now our term generation rule for function application says that whenever we have a goal type $\tau_2$, we can generate a function application $e_1\;e_2$ where the output of the function type agrees with the goal.
We can apply this pattern to the other rules of the type system to obtain a complete term generation system for a language.
We can further augment the system with type-directed example decomposition~\cite{osera_type-and-example-directed_2015, frankle_example-directed_2016} or richer types such as refinements~\cite{polikarpova_program_2016} to obtain true program synthesis systems.

This type-theoretic interpretation of program synthesis gives us immediate insight into how to synthesize programs for languages with rich type systems.
Because rich types constrain the set of possible programs dramatically, program synthesis with types can lead to superior synthesis performance.
On top of this, the type-directed synthesis style directly supports type-directed programming, a hallmark of richly-typed functional programming languages.

\subsection{The Perils of Polymorphism}

However, in supporting rich types, in particular polymorphism, we run into a pair of problems that deserve special attention.
\begin{itemize}[itemsep=0pt]
  \item \textbf{The type enumeration problem}: when dealing with polymorphic types, we must first instantiate them.
  However, there may be many possible instantiations of a polymorphic type.
  For example, consider the \hask{map} function of type \hask{(a -> b) -> [a] -> [b]}.
  Even if we know that we need to create a value of type \hask{[Bool]}, there is nothing directly constraining the type variable \hask{a}.

  Current systems that handle polymorphic synthesis simply enumerate and explore all the possible types that can be created from the context.
  However, this may lead to an excessive exploration of type combinations that do not work, \eg, choosing \hask{a} to be \hask{Int} but not having any functions of type \hask{Int -> Bool} available in the context.
  Furthermore, it may lead to repeated checking of terms that are, themselves, polymorphic, \eg, the empty list \hask{[] :: [a]} for any type \hask{a}.
  We would like to develop a system that systematically searches the space of polymorphic instantiations while minimizing the work done as much as possible.
  \item \textbf{Reasoning about richer types}: polymorphic types are relatively easy to handle in an ad hoc fashion.
  However, polymorphic types form the basis for a variety of advanced type features such as generalized algebraic datatypes (GADTs) and typeclasses that are commonly used in advanced functional programming languages.
  Rather than developing ad hoc solutions for all these related features, it would be useful to have a single framework for tackling them all at once.
\end{itemize}

\subsection{Outline}

In this paper we present a solution to the problems described above: a constraint-based approach to type-directed program synthesis.
Constraint-based typing is used primarily to specify type inference systems which generate constraints between types in the program and then solves those constraints to discover the types of unknown type variables.
In the spirit of type-directed program synthesis, we flip the inputs and outputs of the constraint-based typing system to arrive at a synthesis system that tracks type constraints throughout the synthesis process.
This embodies a new strategy for synthesis---``infer types while synthesizing''---that allows for efficient synthesis of polymorphic code while also giving us a framework to tackle GADTs and other advanced type features built on polymorphism.

In \autoref{sec:constraints}, we develop a constraint-based program synthesis based on a basic constraint-based type inference system.  In \autoref{sec:gadts}, we extend the system to account for GADTs.  In \autoref{sec:scythe}, we discuss our prototype implementation of this approach in our program synthesis tool, \scythe.  And finally in \autoref{sec:extensions} and \autoref{sec:related} we close by discussing future extensions and related work.
\iftoggle{TR}{%
    The extended version of this report which initially appeared in TyDe 2019~\cite{osera_constraint-based_2019} includes several bug fixes to the system as well as complete proofs of soundness in \autoref{app:formalism}.
}{}

\section{Constraint-based Program Synthesis}
\label{sec:constraints}
\begin{figure*}
  $\begin{tabu}{lcll}
  \sigma & \bnfdef & \forall \many{i}{a_i}.\,\tau & \text{Type Schemes} \\
  \tau & \bnfdef & \alpha \bnfalt a \bnfalt T\;\many{i}{\tau_i} \bnfalt \tau_1 \rightarrow \tau_2 & \text{Monotypes} \\
  e & \bnfdef & x \bnfalt K \bnfalt \lambda x.\,e \bnfalt e_1\;e_2 \bnfalt \kwd{case}\;e\;\kwd{of}\;\many{j}{m_j} & \text{Expressions} \\
  m & \bnfdef & K\;\many{k}{x_k} \rightarrow e & \text{Match Branches} \\
  b & \bnfdef & x = e & \text{Top-level Bindings} \\
  \\
  C & \bnfdef & \{ \many{i}{\tau_i \sim \tau_i'} \} & \text{Constraints} \\
  \theta & \bnfdef & \cdot \bnfalt [\tau / a] \theta \bnfalt [\tau / \alpha] \theta & \text{Type Environments} \\
  \Gamma & \bnfdef & \cdot \bnfalt x{:}\sigma, \Gamma \bnfalt K{:}\forall \many{i}{a_i}.\,C \Rightarrow \many{k}{\tau_k} \rightarrow T\;\many{i}{a_i}, \Gamma & \text{Contexts} \\
\end{tabu}$

  \caption{Basic syntax of the object language.}
  \label{fig:basic-syntax}
\end{figure*}

We first develop a constraint-based program synthesis system for a core functional programming language featuring algebraic datatypes and polymorphism.
To do this, we first present a standard Hindley-Milner style type inference system for the language and then show how to re-interpret it as a program synthesis system, augmented with example propagation in the style of \textsc{Myth}~\cite{osera_type-and-example-directed_2015}.

\autoref{fig:basic-syntax} gives the syntax of the object language.
Types are composed of top-level type schemes $\sigma$ which introduce universally quantified type variables.
The language of monotypes $\tau$ enclosed in type schemes include type variables $a$, function types, and saturated algebraic datatypes $T\;\many{j}{\tau_j}$.
The constraint-generation process also generates unification variables $\alpha, \beta, \gamma, \ldots$, that are eventually substituted away during the inference process.
The context $\Gamma$ records both the type of variables and constructors $K$, enforcing that all constructors are (possibly) polymorphic functions whose co-domain is their associated algebraic datatype.
The co-domain is restricted to the universally quantified type variables of the declaration; in \autoref{sec:gadts}, we lift this restriction to enable generalized algebraic datatypes.

The expression language is a standard typed, functional language with algebraic datatype constructors and pattern matching restricted to one-level peeling of head constructors from their arguments.
Like Haskell, we treat constructors like function application and allow for partial application of constructor values.
Conspicuously absent from the expression language are let-bindings.
This is because we do not synthesize let-bindings (that are not top-level) during the synthesis process.
As we shall see shortly, there is no type-or-example-directed way to synthesize such bindings in the general case, so we elide them from our object language.

Throughout this paper, we refer to sequences of objects either using ellipses, \eg, $e_1 \ldots e_k$, or using overbar notation $\many{k}{e_k}$ where we use index variables $i, j, k, \ldots$ to represent the lengths of these sequences.
We denote nested sequences (\ie, sequences of sequences) using nested overbars and both subscripts and superscripts to separate the lengths.
For example, $\many{i}{\many{j}{e^i_j}}$ refers to a sequence (of size $i$) of sequences (of size $j$) of expressions $e$ denoted $e^i_j$.
Finally, whenever possible we use the metavariables $i$, $j$, and $k$ to denote the lengths of the following sequences:
\begin{itemize}[itemsep=0pt]
  \item $i$: types, \eg, $\forall \many{i}{a_i}.\,\tau$, $T\;\many{i}{\tau_i}$,
  \item $j$: examples, \eg, $\many{j}{ \{ S_j \rightarrow \chi_j \} }$, and
  \item $k$: arguments (to constructors), \eg, $K\;\many{k}{e_k}$.
\end{itemize}

\subsection{Type Inference}

\begin{figure}
  \input{figs/inference.tex}
  \caption{Type inference system.}
  \label{fig:inference-system}
\end{figure}

\autoref{fig:inference-system} gives the type inference rules for the system.
As is standard, our constraint-based type inference system operates in two phases: type-and-constraint generation and constraint solving.
The primary judgment of the system $\Gamma \vdash e : \tau \Rightarrow C$ infers that expression $e$ has type $\tau$ under context $\Gamma$ and generates constraints $C$.
Constraints in this system $\tau_1 \sim \tau_2$ assert equalities between types, in particular, giving concrete types to \emph{unification variables} $\alpha$ that are generated for unknown types during the inference process.
For example, in the case of an un-annotated lambda (\textsc{infer-lam}), we do not immediately know the type of the argument to the lambda $x$.
Thus, we create a fresh unification variable $\alpha$ and use that variable as the type of the argument.
In inferring the type of the body of the lambda, we will generate constraints $C$ that refine the type of $\alpha$.

In addition to assigning unification variables to types, constraints also serve the purpose of asserting expected equalities between types of sub-expressions.
When inferring the type of a function application (\textsc{infer-app}), we assert that the type of the head expression is indeed a function type and the argument expression's type is the domain of that function type ($\tau_1 \sim \tau_2 \rightarrow \alpha$).
When inferring the type of a case scrutinee (\textsc{infer-case}), we assert that the type of the scrutinee is indeed an algebraic datatype with fresh unification variables in place of its arguments ($\tau \sim T\;\many{i}{\alpha_i}$).
And finally, we add constraints when inferring the overall type of a case expression stating that all of the branches have the same type ($\many{j}{\beta \sim \tau_j}$).

Polymorphic types interact with the inference system in two ways.
The first is \emph{let-generalization} which occurs at top-level bindings $f = e$.
After generating constraints $C$ for the body of the binding, we solve them using a standard unification algorithm to find a type substitution $\theta$ for each of the unification variables found in $C$ (\textsc{infer-bind}).
Those unification variables with no substituting types ($\forall \many{i}{\alpha_i} \notin \mathrm{fuvs}(\theta)$) become universal type variables in the final type scheme of $f$.
The second way lies in inferring the monotypes we use to instantiate the type schemes of variables.
Rather than requiring that the user provides the instantiation via a type application form or guessing the instantiation upfront, we instantiate a type scheme with fresh unification variables $\many{i}{\alpha_i}$ that will be later constrained based on how the the variable $x$ is used by the program (\textsc{infer-var}).

\subsection{Constraint-based Synthesis}

\begin{figure*}
  $\begin{tabu}{lcll}
  v & \bnfdef & c \bnfalt \lambda x.\,e \bnfalt K\;\many{i}{v_i} & \text{Values} \\
  \chi & \bnfdef & c \bnfalt K\;\many{i}{\chi_i} \bnfalt v \Rightarrow \chi & \text{Examples} \\
  X & \bnfdef & \forall \many{i}{c_i{:}a_i}.\,\many{j}{\chi_j} & \text{Top-level Examples} \\
  S & \bnfdef & \many{i}{[v_i / x_i]} & \text{Environments} \\
  W & \bnfdef & \many{j}{[S_j \mapsto \chi_j]} & \text{Example Worlds} \\
  \Gamma & \bnfdef & \cdots \bnfalt c{:}a, \Gamma & \text{Contexts}
\end{tabu}$

  \caption{Syntactic extensions for type-and-example-directed program synthesis.}
  \label{fig:synthesis-syntax}
\end{figure*}

\autoref{fig:synthesis-syntax} gives the syntactic extensions to the core language to support type-and-example-directed program synthesis in the style of \textsc{Myth}~\cite{osera_type-and-example-directed_2015, frankle_example-directed_2016}.
We extend the language with example values $\chi$ that the user provides as additional specification to the synthesizer.
Intuitively, example values are specifications of values at a particular type that can be decomposed into specifications for those values' components.
For example, each of the values $\ol{v_i}$ of a saturated constructor example $K\;\ol{v_i}$ can become examples for synthesizing each of $K$'s arguments.

However, it is not immediately clear how to decompose values at function type, \ie, lambdas, in a similar way.
Like \textsc{Myth}, we introduce \emph{input-output pairs} $v \Rightarrow \chi$ as a distinguished example value at function type.
Such examples are easily decomposed and align with our intuition that we would like to specify the behavior of a function through a collection of input-output examples (either realized as test cases or documentation).

The synthesis system takes as input a context $\Gamma$, a goal type $\tau$, and a set of examples, and produces a program of that type that agrees with the set of examples.
Ultimately, the system decomposes the examples, assigning sub-values to generated binders---from lambdas and case expressions---as the synthesizer generates them.
These are recorded alongside the examples that generated them, leading to the notion of an \emph{example world} $W$ as a pair of a (value) environment $S$ and a \emph{example goal} $\chi$ for that world.
When checking that a program agrees with these examples, we evaluate the program closed with each example world's environment and check that the resulting value is consistent with that world's example goal.
We write the application of an environment $S$ to an expression $e$ using juxtaposition: $Se$.

\begin{figure}
  \input{figs/refinement-std.tex}
  \input{figs/generation.tex}
  \caption{Type-and-example program synthesis rules.}
  \label{fig:synthesis-system}
\end{figure}

\begin{figure}
  \input{figs/auxiliary.tex}
  \caption{Auxiliary definitions for the synthesis system.}
  \label{fig:synthesis-aux}
\end{figure}

\autoref{fig:synthesis-system} describes the complete synthesis system over the language.
In the spirit of bidirectional typechecking~\cite{pierce_local_2000}, we divide type-directed synthesis into two processes which are realized as a pair of judgments in the system:
\begin{enumerate}[itemsep=0pt]
  \item \emph{Refinement}, $\Gamma \vdash \tau \rhd W \Rightarrow e$, decomposes a set of input examples according to their types.
  \item \emph{Generation}, $\Gamma \vdash \tau \Rightarrow e$ enumerates terms in a type-directed fashion in the absence of examples.
\end{enumerate}
The separation is, in fact, a division of the syntax of the language into \emph{introduction} forms and \emph{elimination} forms for refinement and generation, respectively.
From a type inference perspective, type information flows \emph{into} introduction forms (\ie, we check against their expected types based on their shape) and \emph{out of} elimination forms (\ie, we infer their types based on their sub-components).
From a synthesis perspective, this means we can use the shape of an introduction form to determine how to decompose examples whereas we have no such affordances in the general case for elimination forms.
Thus, we must resort to raw term enumeration for those forms.

As a final note, the system as described does not support recursive functions (note that in \autoref{fig:synthesis-system}, \textsc{refine-lam} does not bind anything for the function itself) in order to make clear the details of constraint propagation in the synthesis process.
Recursion can be easily integrated into the system in the same manner as \textsc{Myth} by treating the set of input-output examples for a function as a \emph{partial function} and using that function value as a binding for the function-being-synthesized.  This requires that the example set is trace complete so that any call to the function has a known value~\cite{osera_type-and-example-directed_2015} and is how the \scythe system discussed in \autoref{sec:scythe} implements recursion.

\subsection{Refinement}
\label{sec:refinement}

The non-polymorphic refinement rules are largely identical to those found in \textsc{Myth}.
The function of each rule is to describe how to decompose a set of example worlds according to the goal type's example form.
For lambdas (\textsc{refine-lam}), we decompose a collection of worlds containing input-output examples by assigning the lambda's binder $x$ the input and synthesizing the body of the lambda with the output as the goal.
For constructors (\textsc{refine-data}), if all the examples share the same head $K$, then we can safely synthesize that constructor and divide up the examples' arguments into examples for each of that constructor's arguments.
For example, suppose that we have a constructor $K : \mathsf{Int} \rightarrow \mathsf{Bool} \rightarrow \mathsf{Int} \rightarrow T$ with example worlds $W = w_1, w_2, w_3$:
\begin{align*}
  w_1 &= S_1 \mapsto K\;0\;\mathsf{True}\;1\\
  w_2 &= S_2 \mapsto K\;2\;\mathsf{False}\;3\\
  w_3 &= S_3 \mapsto K\;4\;\mathsf{True}\;5\\
\end{align*}
By \textsc{refine-data}, we will synthesize $K\;\blacksquare_1\;\blacksquare_2 \;\blacksquare_3$ with three synthesis sub-goals for each of $K$'s arguments.
Each of the sub-goals contains example worlds $W_1$, $W_2$, and $W_3$ respectively:
\begin{align*}
  W_1 &= S_1 \mapsto 0, S_2 \mapsto 2, S_3 \mapsto 4 \\
  W_2 &= S_1 \mapsto \mathsf{True}, S_2 \mapsto \mathsf{False}, S_3 \mapsto \mathsf{True} \\
  W_3 &= S_1 \mapsto 1, S_2 \mapsto 3, S_3 \mapsto 5
\end{align*}
Notably with refinement, the fully specified goal type is required as input to the procedure, so there is no need to reason about type constraints at this point in the system\footnote{%
  Note that even with goal types elided, the examples make inference of the goal type trivial due to canonicity of types.
  However, we may want to consider type-directed synthesis in the absence of any examples which we discuss in more detail in \autoref{sec:extensions}.
}.

Unlike other type-directed forms, case-expressions do not immediately imply a particular type.
Indeed the type of a case-expression is exactly the type shared by its match bodies.
Nevertheless, we can refine examples through case expressions as described by the \textsc{refine-case} rule:
\begin{enumerate}[itemsep=0pt]
  \item Generate a scrutinee expression $e$ at some concrete datatype $T\;\tau_j$.
  \item For each constructor $K$ of $T$, create a match for that constructor using the example worlds $\{ \many{j}{S_j \mapsto \chi_j} \} \subseteq W$ where the scrutinee would evaluate to a value with head constructor $K$ (written using the \emph{filtering operator} $W |^K_e$ defined in \autoref{fig:synthesis-aux}).
\end{enumerate}
When generating a scrutinee, we use the term-generation judgment $C;\Gamma \vdash \tau \Rightarrow e ; C'$ which produces a set of type constraints $C'$ under which $e$ typechecks at type $\tau$.
However, because we assume the concrete datatype of type $T\;\many{i}{\tau_i}$ up-front, we know that the type does not contain any unification variables and thus we can ignore $C'$ for now.
In theory this means that an implementation of the system must explore all possible combinations of datatypes to arguments which is the exact problem we are trying to solve with constraint-based synthesis!
But in practice, we heavily restrict the kinds of expressions that appear as scrutinees already, \eg, to direct arguments of the synthesized function, so such type search is tolerable in practice.

The helper judgment $\Gamma \vdash e; \many{i}{\tau_i}; K; \tau \rhd W \Rightarrow m$ is responsible for synthesizing match branches from the scrutinee and candidate constructor.
Its sole rule (\textsc{refine-match}) extracts and binds the evaluated-to constructor values' arguments in the filtered example worlds and continues synthesis of the match's body.

Finally, we connect refinement and term generation with the \textsc{refine-guess} rule which generates a term $e$ and then checks to see if, for each example world $S \mapsto \chi$, that the closed term $Se$ is equivalent to the goal example $\chi$ according to the standard evaluation rules of the language ($e \equiv_\beta v$ whenever $e \longrightarrow^* v$).
The only non-standard case that can arise is the comparison of a lambda (a $v$) to an input-output example (a $\chi$).
We thus define:
\[
  \lambda x{:}\tau.\,e \equiv_\beta v \Rightarrow \chi \;\mathrm{iff}\; (\lambda x{:}\tau.\,e)\;v \equiv_\beta \chi. \\
\]
That is, we run the input of the input-output pair through the lambda and check to see the resulting value is equivalent to the output of the pair.

\paragraph{Polymorphism in Refinement}
We must consider polymorphic types at two points within refinement.
First, unlike the presentations of the case rule in prior work~\cite{osera_type-and-example-directed_2015}, constructors now have (potentially) polymorphic types.
However, because the type $T\;\many{i}{\tau_i}$ is fully concrete, we can immediately instantiate the constructor's polymorphic type ($\theta = [\many{i}{\tau_i / a_i}]$) and use that type substitution on each of the types of the arguments of the constructor.

Secondly, top-level bindings are at type schemes $\sigma$, so we must provide examples at polymorphic type for them.
However, there is no type-directed syntactic form for polymorphic types!
Furthermore, even an explicit term-level type abstraction, \eg, $\Lambda a.\,\chi$, does not provide much help!
This is because the type abstraction does not introduce \emph{values} of the type variable $a$, and we need such values to write down complete examples.

We therefore introduce a new top-level example value form $X$ in \autoref{fig:synthesis-syntax}, the \emph{polymorphic example} $\forall \many{i}{c_i{:}a_i}.\,\many{j}{\chi_j}$, which introduces a set of \emph{polymorphic constants} $c_i$ at type variables $a_i$.
This example form was first introduced in Osera's thesis~\cite{osera_program_2015} in the context of synthesizing within System F, and we have adapted it to this Haskell-like setting.
For example, we might specify for a goal type of $\forall a.\,a \rightarrow \kwd{Maybe}\;a \rightarrow a$:
\begin{align*}
  \forall c_1{:}a, c_2{:}a.&\,c_1 \Rightarrow \kwd{Nothing} \Rightarrow c_1,  \\
                           &\,c_1 \Rightarrow \kwd{Just}\;c_2 \Rightarrow c_2.
\end{align*}
In these two examples, we introduce polymorphic constants $c_1$ and $c_2$ of type $a$ and use them throughout the input-output examples for the standard $\hask{fromMaybe}$ function.\footnote{%
  The astute reader ought to notice that these examples are virtually the definition of $\hask{fromMaybe}$ already!
  We reflect on this fact in more detail in \autoref{sec:scythe}.
}

We refine polymorphic examples introduced at top-level bindings (\textsc{refine-poly}) by simply stripping the top-level forall and synthesizing at the contained sub-example value.
The binding information for the polymorphic constant is only useful for typechecking the examples themselves; they are not necessary for the synthesis process as the example values are simply carried around as values bound to variables in an example world's environment.

\subsection{Generation}

Like refinement, term generation takes as input a goal type.
However, during the course of generation, we may need to instantiate types schemes when generating variables.
Rather than committing to a choice of a complete instantiation of a type scheme upfront (which requires us to systematically search the space of possible types), we infer the instantiation as we generate the complete term.
If type-and-example-directed program synthesis captures the idea of ``evaluating \emph{during} enumeration''~\cite{osera_type-and-example-directed_2015}, then our approach of integrating type inference into the term generation adds the idea of ``inferring \emph{during} enumeration''.

To do so, we ``invert'' the appropriate rules from our type inference system (\autoref{fig:inference-system}) to create a term generation system that infers types during the generation process.
This judgment, written $C; \Gamma \vdash \tau \Rightarrow e; C'$, takes a collection of constraints $C$, a context $\Gamma$, and type $\tau$ as input as produces a program $e$ and an updated constraint set $C'$ as output.
In effect, we thread a set of type constraints throughout the generation process which makes concrete the idea that a choice of a component in one part of an expression might influence choices in the rest of the expression.
It is worthwhile to note that our rules are not an exact inversion of the type inference system defined in \autoref{fig:inference-system} because the generation process is given a concrete goal type to work with as input.

While refinement handles the introduction forms of the language, generation concerns the elimination forms, of which there are only variables and function application.
With variable generation (\textsc{gen-var}), we speculatively choose a variable to generate and then create a new constraint recording that the goal type must be equivalent to the type scheme of the variable instantiated with fresh unification variables ($\tau \sim [\many{i}{\alpha_i / a_i}] \tau'$).
And with function application generation (\textsc{gen-app}), we first speculatively generate a function under the constraint that its domain matches our eventual argument type and the co-domain matches our goal type.
We then generate function arguments under the constraints accrued from generating the function.

While the initial goal type for generation is a concrete type, we will quickly introduce unknown types, \ie, unification variables, through the process.
We eliminate unification variables through unification of the constraints during the generation process (\textsc{gen-unify}).
If our goal type is some $\tau_1$ (presumably a unification variable) and our current constraint set allows us to conclude that it is equivalent to $\tau_2$ ($C \vDash \tau_1 \sim \tau_2$) then we can synthesize at this type $\tau$ instead.

As we generate constraints, we need to take care that these constraints are consistent.
An inconsistent set of constraints asserts that in-equal types are equal, \eg, $\kwd{Int} \sim \kwd{Bool}$.
If at any point we arrive at a set of inconsistent constraints, we know that the currently generated term is not well-typed and should be rejected.
In our formal system, this amounts to using the $\mathsf{consistent}$ helper relation to check that whenever we output a constraint set from a rule that it is indeed consistent.
$\mathsf{consistent}(C)$ simply asserts that a unifier (\ie, type substitution) exists for $C$ as described in~\autoref{fig:synthesis-aux}.

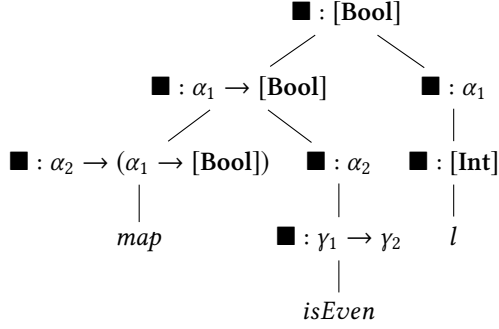
\begin{figure}
  \centering
  \begin{forest}
    %[{$\blacksquare : [\kwd{bool}]$}
      %[{$\blacksquare : \alpha_1 \; \blacksquare : \beta_1$}
        %[{$\blacksquare : \alpha_2 \; \blacksquare : \beta_2$}
          %[{$map$}]
          %[{$isEven$}]
        %]
        %[{$l$}]
      %]
    %]
    [{$\blacksquare : [\kwd{Bool}]$}
      [{$\blacksquare : \alpha_1 \rightarrow [\kwd{Bool}]$}
        [{$\blacksquare : \alpha_2 \rightarrow (\alpha_1 \rightarrow [\kwd{Bool}])$}
          [{$map$}]
        ]
        [{$\blacksquare : \alpha_2$}
          [{$\blacksquare : \gamma_1 \rightarrow \gamma_2$}
            [{$isEven$}]
          ]
        ]
      ]
      [{$\blacksquare : \alpha_1$}
        [{$\blacksquare : [\kwd{Int}]$}
          [{$l$}]
        ]
      ]
    ]
  \end{forest}
  \caption{Example generation derivation of $map$.}
  \label{fig:gen-deriv}
\end{figure}

For example, consider generating a program of type $[\kwd{Bool}]$ under the context:
\begin{align*}
  \Gamma =\;& map : \forall a b.\,(a \rightarrow b) \rightarrow [a] \rightarrow [b], \\
          & isEven : \kwd{Int} \rightarrow \kwd{Bool}, \\
          & l : [\kwd{Int}]
\end{align*}
with goal type $[\kwd{Bool}]$.
\autoref{fig:gen-deriv} graphically shows one potential generation of a program of this type.

Starting at the root, we first apply the \textsc{gen-app} rule which generates a fresh unification variable $\alpha_1$ and two generation sub-goals at types $\alpha_1 \rightarrow [\kwd{Bool}]$ and $\alpha_1$.
We apply \textsc{gen-app} again at the first sub-goal which generates another unification variable $\alpha_2$ and two more sub-goals at types $\alpha_2 \rightarrow (\alpha_1 \rightarrow [\kwd{Bool}])$ and $\alpha_2$.

At the first of these new sub-goals, we can apply \textsc{gen-var} and choose the $map$ variable.
This generates the constraint
\[
  \alpha_2 \rightarrow \alpha_1 \rightarrow [\kwd{Bool}] \sim (\gamma_1 \rightarrow \gamma_2) \rightarrow [\gamma_1] \rightarrow [\gamma_2]
\]
for fresh unification variables $\gamma_1$ and $\gamma_2$.
Now we can return to the sub-goal with goal type $\alpha_2$ and apply \textsc{gen-unify} to continue generating at type $\gamma_1 \rightarrow \gamma_2$ since our constraint implies that $\alpha_2 \sim \gamma_1 \rightarrow \gamma_2$.
This allows us to apply \textsc{gen-var} and choose the $isEven$ variable, adding the constraint
\[
  \gamma_1 \rightarrow \gamma_2 \sim \kwd{Int} \rightarrow \kwd{Bool}.
\]
Finally, we can return to the sub-goal with goal type $\alpha_1$, apply \textsc{gen-unify} to generate at type $[\kwd{Int}]$ (because $\alpha_1 \sim [\gamma_1] \sim [\kwd{Int}]$), and the choose $l$ via \textsc{gen-var}, generating a final, trivial constraint $[\kwd{Int}] \sim [\kwd{Int}]$.

Because we are ensuring that our generated constraint sets are consistent at every step, the order in which we explore generation sub-goals doesn't matter.
However, the order heavily influences how much backtracking we will need to perform because of ``bad'' choices of terms.
For example, if rather than synthesizing $map$ first, we instead try to synthesize its arguments, we may make many spurious choices if there are non-list types in the context.
Thus, even though our constraint-based synthesis system allows us to consider types in a more incremental fashion, we must still be conscious of performance during implementation.
In this particular case, it makes sense to attempt to synthesize functions before their arguments since the function is more likely to prune the search space of types than synthesizing one of its arguments.

\section{Synthesis with GADTs}%
\label{sec:gadts}
Traditional algebraic types require that the output type of its constructors have the form $T\;\many{i}{a_i}$ where the $a_i$s are universally quantified type variables from the type of the constructor.
\emph{Generalized algebraic data types} (GADTs) lift this restriction so that the type arguments of $T$ may be any type, not just type variables. 
While a seemingly insignificant change, GADTs strike a powerful balance between power and complexity on the spectrum of rich type systems.

The quintessential example of GADTs-in-action is the tagless interpreter given below in Haskell:
\begin{minted}{haskell}
data Exp a where
  Lit  :: a -> Exp a
  Plus :: Exp Int -> Exp Int -> Exp Int
  Eq   :: Exp Int -> Exp Int -> Exp Bool
  If   :: Exp Bool -> Exp a -> Exp a -> Exp a

eval :: Exp a -> a
eval (Lit x)       = x
eval (Plus e1 e2)  = eval e1 + eval e2
eval (Eq e1 e2)    = eval e1 == eval e2
eval (If e1 e2 e3) =
  if eval e1 then eval e2 else eval e3
\end{minted}
Here, we parameterize the type of \mintinline{haskell}{Exp a} by the type \mintinline{haskell}{a} that the expression evaluates to.
This allows us to give \mintinline{haskell}{eval} the rich type \mintinline{haskell}{Exp a -> a} which in turn allows us to write \mintinline{haskell}{eval} without any case analyses on the values it produces.

How do GADTs complicate typechecking and consequently type-directed program synthesis?
For example, consider the \mintinline{haskell}{Plus} case of \mintinline{haskell}{eval}.
If \mintinline{haskell}{Plus} was a regular algebraic datatype (say, with type \mintinline{haskell}{Exp -> Exp -> Exp}), the body would not typecheck because \mintinline{haskell}{(+)} produces a value of numeric type whereas the function expects a value of type \mintinline{haskell}{a}.
To enable the body of \mintinline{haskell}{Plus} to typecheck, the typechecker must deduce that because the input is a constructor that produces an \mintinline{haskell}{Exp Int}, then $\hask{a} \sim \hask{Int}$ and thus we must typecheck the body at type \mintinline{haskell}{Int}.
In effect, consuming a GADT via a pattern match introduces \emph{local type constraints} into the environment that we must consider when typechecking each branch~\cite{vytiniotis_outsideinx_2011}.

\begin{figure}
  $\begin{tabu}{lcll}
  \Gamma & \bnfdef & \cdots \bnfalt K{:}\forall \many{i}{a_i}.\,C \Rightarrow \many{k}{\tau_k} \rightarrow T\;\many{i}{a_i}, \Gamma & \text{Contexts}
\end{tabu}$

  \caption{Syntactic extensions for GADT synthesis.}
  \label{fig:gadt-syntax}
\end{figure}

With some minor changes, we can adapt the constraint-based framework we developed previously to accommodate GADTs.
In particular, we follow the presentation of \textsc{OutsideIn}~\cite{vytiniotis_outsideinx_2011} and represent GADTs as standard polymorphic constructors coupled with constraints $C$ on the type variables that appear in the output type of the constructor.
This allows us to avoid a step of unification during type inference and synthesis to discover these constraints on our own.
As an example, we can rewrite the declaration of \mintinline{haskell}{Exp} above in this style:
\begin{minted}{haskell}
data Exp a where
  Lit  :: a -> Exp a
  Plus :: (a ~ Int) =>
            Exp Int -> Exp Int -> Exp a
  Eq   :: (a ~ Bool) =>
            Exp Int -> Exp Int -> Exp a
  If   :: Exp Bool -> Exp a -> Exp a -> Exp a
\end{minted}
With this style, our GADT constructor definitions obey the same restriction as regular algebraic datatypes---the type arguments of the output type are type variables---but the addition of type constraints allow us greater freedom as to what types the constructors produce.

\begin{figure}
  \input{figs/refinement-gadts.tex}
  \caption{Extensions to synthesis system for GADTs.}
  \label{fig:gadt-semantics}
\end{figure}

\autoref{fig:gadt-semantics} gives the updated synthesis rules with generalized algebraic datatypes.
Because GADTs only affect datatype declarations and constructors, we only need to update the rules for refinement; raw term generation is left untouched.

We update the refinement judgment to carry around the set of constraints $C$ gained through case analysis of GADT values.
Unlike term generation, additional constraints gained through GADT analysis are scoped to their matches.
Because of this, we only need to thread constraints down the refinement judgment as an input and not gather updated constraints as an output.
Most rules simply pass their constraints to their sub-refinement calls as in the case of lambdas (\textsc{refine-gadt-lam}) or to raw term generation (\textsc{refine-gadt-guess}).
At the top-level, we begin refinement with the empty set of constraints (\textsc{refine-gadt-binding}).

Now when generating a case expression (\textsc{refine-gadt-case}), we also extract the constraints associated with each constructor $C_i$ and add it to the constraint set associated with synthesizing that constructor's match body.
Notably, we instantiate the additional constraints $C_i$ with the type environment $\theta$ generated by unifying the type of the scrutinee with the output type of the constructors, written $\theta C_i$.
Constraints are utilized during refinement when synthesizing constructors (\textsc{refine-gadt-data}).
We ensure that we only synthesize a constructor if the current set of constraints $C$ implies all of the (instantiated) constraints bundled with the constructor's type.
Note that this stands in contrast to inconsistencies detected during case generation.
If the constraints of a constructor are inconsistent with the current set of constraints when synthesizing a match alternative then the alternative is unreachable and can be elided as an optimization.

Finally, like with term generation, we need to use the constraints in refinement to refine the goal type when appropriate.
The unification rule for refinement (\textsc{refine-gadt-unify}) operates identically to its generation counterpart but at type variables: at a type variable $a$, we can use $a$'s assigned type as recorded by $C$ instead.

As an example of the system in action, consider how constraints are generated and utilized while synthesizing part of the \mintinline{haskell}{eval} function introduced at the beginning of this section.
After an initial application of \textsc{refine-gadt-binding} and \textsc{refine-gadt-lam}, we arrive at the following refinement state:
\begin{align*}
  eval\;e       =\;& \blacksquare :: a \rhd W \\
  \Gamma        =\;& eval : \forall a.\,\kwd{Exp}\;a \rightarrow a, e : \kwd{Exp}\;a \\
  C             =\;& \cdot.
\end{align*}
At this point, we can apply the \textsc{refine-gadt-case} rule to synthesize a case analysis on $e$.
Let's just focus on the $Plus$ constructor which has type:
\[
  \forall b.\,(b \sim \kwd{Int}) \Rightarrow \kwd{Exp}\;\kwd{Int} \rightarrow  \kwd{Exp}\;\kwd{Int} \rightarrow \kwd{Exp}\;b.
\]
(We $\alpha$-rename the constructor's bound type variable to $b$ in order to distinguish it from the type variable from the overall type of the function $a$).
Unifying the return type of $Plus$ with $e$ yields the type substitution $\theta = [a/b]$ which then yields the final constraint $a \sim \kwd{Int}$ which is added to the set of constraints when synthesizing the body of $Plus$'s match at goal type $a$.
This constraint is then passed to term generation which is then used by \textsc{gen-unify} to change the type of the goal from $a$ to $\kwd{Int}$ which then justifies synthesizing an application of the $(+)$ function.
When we go to generate recursive function calls to $eval$, the constraint is also utilized when instantiating the polymorphic type of $eval$ to $\kwd{Exp}\;a \rightarrow a$ with $a \sim \kwd{Int}$.

\subsection{Metatheory}

There are two properties that we ought to consider of the synthesis system we have developed so far:
\begin{itemize}
  \item \textbf{Soundness} states that synthesized programs are consistent with their specification.
  \item \textbf{Completeness} states that we are able to synthesize all programs that meet a particular specification.
\end{itemize}
Here, we briefly discuss these two properties.
\iftoggle{TR}{%
A summary of the complete system as well as detailed proofs can be found in \autoref{app:formalism}.
}{%
A summary of the complete system as well as detailed proofs can be found in the extended version of this paper.
}

\paragraph{Soundness}

The kinds of specification we consider in this system are \emph{types} and \emph{examples}.
Thus, we can consider soundness with respect to each:\footnote{%
  Note that, for brevity's sake, we only state the relevant properties for expression refinement.
  Similar statements must be made for each of the remaining judgments of the system---match synthesis and term generation.
}
\begin{lemma}[Type Soundness]
  If $C; \Gamma \vdash \tau \rhd W \Rightarrow e$ then $C; \Gamma \vdash e : \tau$.
\end{lemma}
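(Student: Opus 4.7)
The plan is to prove this by structural induction on the refinement derivation $C; \Gamma \vdash \tau \rhd W \Rightarrow e$, with mutually inductive statements for the match-synthesis judgment and, crucially, for the term-generation judgment. The needed companion lemma for generation will read roughly: if $C; \Gamma \vdash \tau \Rightarrow e; C'$ and $\mathsf{consistent}(C')$ with $\mathsf{unify}(C') = \theta$, then $\theta \Gamma \vdash e : \theta \tau$ with constraints $C''$ such that $C' \vDash C''$. Stating the generation lemma this way is what makes the inductive step for \textsc{refine-gadt-guess} go through, because the premise of that rule only gives us consistency of $C'$ at the leaf and we must translate this back into a derivation in the type inference system of Figure~\ref{fig:inference-system}.

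For the refinement cases themselves, each case is essentially a mirror of the corresponding inference rule. In \textsc{refine-gadt-lam} I would invoke the IH on the body under the extended context $x{:}\tau_1, \Gamma$ and then apply \textsc{infer-lam}, being careful that the inferred $\alpha$ of the inference rule is resolved to $\tau_1$ by the ambient constraint set. In \textsc{refine-gadt-data} I would apply the IH to each argument sub-derivation to obtain $C; \Gamma \vdash e_k : \theta \tau_k$, combine these via a constructor-application rule derivable from \textsc{infer-var} and repeated applications of \textsc{infer-app}, and use the hypothesis $C \vDash \theta C'$ to discharge the GADT side-conditions on $K$. For \textsc{refine-gadt-case} I would first apply the generation soundness lemma to obtain typing of the scrutinee at $T\;\many{i}{\tau_i}$, then invoke the match-branch IH (threaded with the enriched constraint set $C \cup \theta C_j$) for each branch, and combine everything with \textsc{infer-case}. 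The \textsc{refine-gadt-match} case is the place where the local GADT constraints genuinely enter: the IH produces $C \cup \theta C'; \ldots \vdash e' : \tau$, and we need to show that this translates to a well-typed match branch under the appropriate branch-local assumptions, which matches how GADT pattern matching is typed in the OutsideIn-style system we inherit.

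The two cases that require the most care are \textsc{refine-gadt-unify} and \textsc{refine-gadt-guess}. For the unify case, I would use the semantic hypothesis $C \vDash a \sim \tau$: since any solution of $C$ identifies $a$ with $\tau$, a typing derivation at $\tau$ lifts to a typing derivation at $a$ modulo that identification, and the rule's output constraint set is unchanged so the congruence is preserved. For the guess case, the generation soundness lemma delivers a typing under $\theta = \mathsf{unify}(C')$, and we must verify that the example-consistency premise $\many{j}{S_j e \equiv_\beta \chi_j}$ plays no role in the typing conclusion, which it doesn't, so soundness is immediate once the lemma is in hand.

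The main obstacle is the generation soundness lemma, and inside it the \textsc{gen-var} case: instantiating a (possibly GADT-bearing) type scheme $\forall \many{i}{a_i}.\,C' \Rightarrow \tau'$ with fresh unification variables $\many{i}{\alpha_i}$ produces a constraint set $C \cup \theta C' \cup \{\tau \sim \theta \tau'\}$ whose consistency must be leveraged to recover a genuine instance typing of $x$. One must show that any unifier of the returned constraint set witnesses both that the scheme's bundled constraints $\theta C'$ are satisfied and that $\tau$ and $\theta \tau'$ agree, so that \textsc{infer-var} composed with the unifier gives the desired typing. A subsidiary obstacle is the monotonicity property that consistency of the output constraint set $C'$ of a sub-derivation implies consistency of every intermediate constraint set, which is what lets the \textsc{gen-app} case compose the two sub-derivations' typings soundly; this should follow directly from the fact that constraints are only ever added, never removed, along each derivation.
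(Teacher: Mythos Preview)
Your overall structure---induction on the refinement derivation, a companion lemma for generation, and the monotonicity/weakening observations about constraints---is right and matches the paper's architecture. But there is a genuine mismatch in the target typing system that makes several of your steps unworkable as stated.

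The conclusion $C; \Gamma \vdash e : \tau$ is \emph{not} the constraint-generating inference judgment $\Gamma \vdash e : \tau \Rightarrow C$ of \autoref{fig:inference-system}; it is a separate declarative typechecking judgment (given in \autoref{fig:app-typechecking}) that carries the constraint set $C$ as an input and includes a rule \textsc{type-unify} allowing $C; \Gamma \vdash e : \tau_1$ whenever $C \vDash \tau_1 \sim \tau_2$ and $C; \Gamma \vdash e : \tau_2$. Your plan to ``translate back into a derivation in the type inference system of Figure~\ref{fig:inference-system}'' and to invoke \textsc{infer-lam}, \textsc{infer-var}, \textsc{infer-app}, \textsc{infer-case} is therefore aiming at the wrong judgment. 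For instance, in your \textsc{refine-gadt-lam} case you worry about the ``fresh $\alpha$'' of \textsc{infer-lam} being resolved by the ambient constraints---but a fresh $\alpha$ cannot be constrained by anything already in $C$; the paper's \textsc{type-lam} rule simply has no such $\alpha$. Likewise, your generation companion lemma is stated via a unifier $\theta$ applied to $\Gamma$ and $\tau$, whereas the paper's version is much more direct: if $C; \Gamma \vdash \tau \Rightarrow e; C'$ then $C'; \Gamma \vdash e : \tau$, proved against the declarative system using \textsc{type-unify} at the leaves.

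This misidentification also obscures the one genuinely delicate step, \textsc{refine-gadt-guess}. Generation soundness gives $C'; \Gamma \vdash e : \tau$, but the rule's conclusion requires $C; \Gamma \vdash e : \tau$ with the \emph{smaller} $C$. The paper discharges this by arguing that every constraint in $C' \setminus C$ (they all arise from \textsc{gen-var}) either is already satisfied---else consistency would have failed---or mentions only fresh unification variables not present in $\tau$ or $\Gamma$, and hence can be dropped without affecting typability. Your plan to pass through $\theta = \mathsf{unify}(C')$ does not obviously yield a derivation under $C$; you would still need a bridging argument of exactly this kind, and you have not stated one.
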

\begin{proofsketch}
By induction on the synthesis derivation.
Intuitively, we synthesize terms in a type-directed manner (the synthesis rules are derived directly from the typing rules of the system), so we expect these terms to be well-typed.
\end{proofsketch}

\begin{lemma}[Example Soundness]
  If $C; \Gamma \vdash \tau \rhd W \Rightarrow e$ then for all $S \mapsto \chi \in W$, $Se \equiv_\beta \chi$.
\end{lemma}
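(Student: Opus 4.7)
The plan is to proceed by structural induction on the derivation of $C; \Gamma \vdash \tau \rhd W \Rightarrow e$, with mutually inductive hypotheses for the match-synthesis judgment and (trivially, since worlds play no role there) for term generation. Most cases will reduce quickly to the inductive hypothesis combined with elementary calculations about $\equiv_\beta$ and substitution.

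For the base case \textsc{refine-gadt-guess} the conclusion is literally the hypothesis $\many{j}{S_j e \equiv_\beta \chi_j}$. For \textsc{refine-gadt-unify} the worlds are unchanged and the result is immediate from the IH. For \textsc{refine-gadt-lam} the obligation $S_j(\lambda x.e) \equiv_\beta v_j \Rightarrow \chi_j$ unfolds, by the input--output clause of $\equiv_\beta$, to $(S_j(\lambda x.e))\,v_j \equiv_\beta \chi_j$; one beta step (valid because $x$ is chosen fresh, so $x \notin \mathrm{dom}(S_j)$ and $[v_j/x]$ commutes with $S_j$) yields $[v_j/x]S_j e$, which the IH gives us. For \textsc{refine-gadt-data} we push $S_j$ under the constructor, observing that $S_j(K\,\many{k}{e_k}) = K\,\many{k}{S_j e_k}$, and assemble the $k$ instances of the IH to conclude $K\,\many{k}{S_j e_k} \equiv_\beta K\,\many{k}{\chi^j_k}$ by congruence of $\equiv_\beta$ under constructor application.

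The main obstacle will be the case rule \textsc{refine-gadt-case} together with its auxiliary \textsc{refine-gadt-match}. Here the worlds $W$ are partitioned across branches by the filter $W|^K_e$, and each branch's IH only says something about the worlds that actually fall into it. To lift this to every $S \mapsto \chi \in W$, I need to argue: (i) by Type Soundness the scrutinee has type $T\,\many{i}{\tau_i}$ under each $S_j$, so by canonical forms $S_j e \longrightarrow^* K_{j^\star}\,\many{k}{v^j_k}$ for some constructor $K_{j^\star}$ of $T$; (ii) since the case expression has a branch for every constructor of $T$, exactly one branch $m_{j^\star}$ matches; (iii) the standard evaluation rule for case gives $S_j(\kwd{case}\;e\;\kwd{of}\;\many{j'}{m_{j'}}) \longrightarrow^* [v^j_k/x_k]S_j e'_{j^\star}$, which by the match-branch IH is $\equiv_\beta \chi_j$.

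The match case itself is then mostly bookkeeping: the filtered worlds $W|^K_e$ come equipped by construction with the evaluation facts $S_j e \longrightarrow^* K\,\many{k}{v^j_k}$, so the binding $\many{k}{[v^j_k/x_k]}S_j$ used in the IH is precisely the environment that arises under evaluation, and the conclusion follows by composing one reduction step with the IH. Throughout, I would need a small auxiliary lemma stating that $\equiv_\beta$ is preserved by prefixing a $\longrightarrow^*$ reduction sequence, which is routine from the definition $e \equiv_\beta v$ iff $e \longrightarrow^* v$ together with determinism of reduction.
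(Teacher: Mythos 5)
Your proposal is correct and follows essentially the same route as the paper's proof: induction on the refinement derivation, with \textsc{refine-gadt-guess} immediate, the lambda and constructor cases handled by unfolding $\equiv_\beta$ and pushing the environment under the term former, and the case/match rules handled by observing that the filter $W|^K_e$ sends every example world to exactly one branch. Your treatment is if anything slightly more careful than the paper's in the case rule, where you explicitly justify via canonical forms that each $S_j e$ reduces to some constructor form, a step the paper leaves implicit.
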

\begin{proofsketch}
By induction on the synthesis derivation.
During refinement, we decompose examples in such a way that example satisfaction of sub-components results in example satisfaction of the whole program.
During term generation, the \textsc{refine-gadt-guess} rule ensures example satisfaction directly for generated terms.
\end{proofsketch}

\paragraph{Completeness}
This property states that our synthesis system is capable of generating all possible programs given a fixed goal type and set of examples.
\begin{lemma}[Completeness]
  If $C; \Gamma \vdash e : \tau$ and $\Gamma \vdash W : \tau$ then $C; \Gamma \vdash \tau \rhd W \Rightarrow e$.
\end{lemma}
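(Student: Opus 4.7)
The plan is to proceed by structural induction on the typing derivation $C; \Gamma \vdash e : \tau$, carried out simultaneously with analogous completeness statements for the match-synthesis and generation judgments. For each typing rule the target synthesis rule is largely determined by whether $e$'s head is an introduction or elimination form: introduction forms (lambdas and saturated constructor applications) correspond to the refinement rules, while elimination forms (variables, applications, and case scrutinees) correspond to the generation judgment, which is reached from refinement via \textsc{refine-gadt-guess}. I would strengthen the statement to carry along an example-satisfaction hypothesis $S_j e \equiv_\beta \chi_j$ for every world $S_j \mapsto \chi_j \in W$; without this the conclusion cannot hold, since \textsc{refine-gadt-guess} demands it outright, and it is presumably what $\Gamma \vdash W : \tau$ is meant to package.

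For the introduction cases, the work is routine. If $e = \lambda x.\,e'$, then typing gives us $C; x{:}\tau_1, \Gamma \vdash e' : \tau_2$, and by standard reasoning about $\beta$-equivalence on input--output pairs I can push each world's input value through and apply \textsc{refine-gadt-lam}, invoking the inductive hypothesis on the body. The constructor case $e = K\,\overline{e_k}$ is similar: I split each example $K\,\overline{\chi^j_k}$ into per-argument sub-worlds $W_k$, verify the guarded constraint $C \vDash \theta C'$ from the constructor's typing derivation, and recurse on each $e_k$. In both cases, example satisfaction of the whole follows from example satisfaction of the parts, which is preserved by the inductive hypothesis.

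The elimination cases are where the real subtlety lies. For a variable or an application, I have to construct a generation derivation via the inverted rules \textsc{gen-var} and \textsc{gen-app}, and then wrap it in \textsc{refine-gadt-guess}. The key lemma needed here is a completeness property for generation: if $C; \Gamma \vdash e : \tau$ then there exist constraints $C'$ with $C; \Gamma \vdash \tau \Rightarrow e; C'$ and $C'$ consistent. I would prove this by a separate induction on the typing derivation of $e$, exploiting the fact that every instantiation made by the typing rules can be mirrored by fresh unification variables in generation, with \textsc{gen-unify} bridging any mismatch between the goal type $\tau$ and the type assigned to a variable after fresh instantiation. The case expression $e = \kwd{case}\;e_0\;\kwd{of}\;\overline{m_j}$ then combines both flavors: invoke generation completeness on the scrutinee $e_0$, partition $W$ using the filtering operator $W|^K_{e_0}$ based on how $S_j e_0$ reduces in each world, and appeal to the inductive hypothesis on each branch with the augmented constraint set $C \cup \theta C'$ supplied by the GADT match.

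The principal obstacle is the generation lemma. It must handle the mismatch between typing, which picks a single concrete instantiation of each polymorphic binding up front, and generation, which defers to unification by introducing fresh unification variables and threading constraints. I expect this to require a small auxiliary result to the effect that if $\theta$ is any solution witnessing $C; \Gamma \vdash e : \tau$, then the constraint set produced by generation is satisfied by some extension of $\theta$, and therefore consistent; \textsc{gen-unify} can then be invoked exactly where the generation judgment introduces a fresh $\alpha$ that the typing derivation instantiates to a compound type. The remaining technicalities--freshness side conditions, commutation of substitutions with example environments, and preserving example satisfaction through world filtering in \textsc{refine-gadt-case}--are standard and should not introduce genuine difficulty.
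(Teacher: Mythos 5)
There is a fundamental problem: the lemma you are trying to prove is false for this system, and the paper states this explicitly rather than proving it. The statement appears in the text only so that it can be immediately refuted---the system deliberately sacrifices completeness for efficiency. Your proof strategy founders exactly at the point you identify as ``the principal obstacle,'' the generation completeness lemma: \emph{if $C; \Gamma \vdash e : \tau$ then $C; \Gamma \vdash \tau \Rightarrow e; C'$ for some consistent $C'$}. This auxiliary claim is not merely delicate, it is untrue. The generation judgment has only three rules---\textsc{gen-var}, \textsc{gen-app}, and \textsc{gen-unify}---so the only terms it can ever produce are variables and nested applications of variables to such terms. A perfectly well-typed application whose argument is a lambda (e.g.\ $\mathit{map}\;(\lambda x.\,x)\;l$) or a \kwd{case}-expression has no generation derivation, and neither does any term containing a $\beta$-redex such as $(\lambda x.\,x)\;\mathsf{True}$, since the system only searches for programs in $\beta$-normal, $\eta$-long form. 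Your proposed induction on the typing derivation would get stuck in the \textsc{type-app} case whenever $e_2$ is an introduction form, with no synthesis rule available to discharge the obligation.

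The constructive part of your plan is not wasted---the introduction-form cases and the idea of strengthening the hypothesis with example satisfaction are exactly what one would need if the generation judgment were extended to cover lambdas and \kwd{case}s in argument position (the paper sketches precisely this as future work, noting that nested \kwd{case}s could be ``bubbled up'' but nested lambdas genuinely cannot). But as the system stands, the correct deliverable is a counterexample witnessing incompleteness, or a completeness theorem restricted to the $\beta$-normal, $\eta$-long, application-spine fragment that generation actually covers, not a proof of the lemma as stated.
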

However, our system does not obey this property in the name of efficiency!
For example, note that \textsc{gen-app} does not allow for the generation of $\mathrm{case}$ expressions in the position of arguments.

But such cases are not terribly interesting with respect to completeness; a case nested in an argument position of an application could instead be ``bubbled up'' to encase the entire application (at the cost of code redundancy).
Are there more interesting programs that the system is incapable of synthesizing?
We leave this exploration to future work.

\section{Integrating Constraint-based Synthesis in Scythe}%
\label{sec:scythe}
We have implemented constraint-based synthesis for polymorphic types as described in \autoref{sec:constraints} in an in-development program synthesis tool called \scythe.
\scythe supports \emph{live, type-directed development} for the Haskell programming language by extending prior work on GHC for typed holes~\cite{gissurarson_suggesting_2018}.
The typed holes facility of GHC allows users to annotate their program with holes.
During typechecking, GHC will report the inferred type of each hole and suggest possible variables from the context to fill these holes.
The search process currently in GHC is fairly na\"{i}ve, only suggesting individual components whose type matches the goal.
\scythe provides a far more in-depth search of the space of possible programs using the program synthesis techniques described in this work.

To do this, \scythe leverages the unmodified GHC API to interact with existing Haskell code with a high degree of compatibility.
It uses GHC to load external libraries and parse, typecheck, interpret, and manipulate Haskell code.
On the surface, this seems difficult to do since the type-and-example-directed program synthesis process interweaves a number of compilation processes, in particular, typechecking and partial evaluation.
However with some engineering techniques and compromises, we can adopt off-the-shelf compiler APIs to provide this rich kind of editing experience, similar to the efforts of Merlin for OCaml~\cite{bour_merlin:_2018}.

As an example of the system in operation, here is the declaration of the \hask{fromMaybe} example from~\autoref{sec:refinement} as realized in the prototype version of \scythe:
\begin{minted}{haskell}
--
-- | {@
--     fromMaybe :: a -> Maybe a -> a
--     fromMaybe a1 Nothing   = a1
--     fromMaybe a1 (Just a2) = a2
--     @@
--     ctx=(Just, Nothing)
--   @}
fromMaybe :: a -> Maybe a -> a
fromMaybe s1 m1 = _
\end{minted}
Scythe looks for special \texttt{\{@...@\}} sections in Haddock comments which denote \scythe example blocks.
These example blocks contain type-and-example information as well as options to control the synthesizer such as the \texttt{ctx} option that allows the user to specify the components to consider during synthesis.
Note that like Haskell, \scythe infers the universal quantifier for polymorphic example values; all variables of the form \hask{ak} where \hask{a} is a type variable introduced by the function signature and \hask{k} is a natural number are considered to be polymorphic constants of type \hask{a}.

After finding the hole, \scythe offers the following single suggestion to complete it:
\begin{minted}{haskell}
> (ok) case m1 of
>       Nothing -> s1
>       Just a1 -> a1
\end{minted}

\subsection{Experience with Polymorphic Synthesis}

To date, we have performed a preliminary investigation of how type-and-example-driven polymorphic synthesis performs within \scythe over small toy functions.
We reflect on our experiences and lessons learned so far from this investigation that are motivating our future work in this area.

\paragraph{Difficulties with Examples}

Polymorphic example values give us great flexibility in how we specify goals of polymorphic type.
However, as discussed in \autoref{sec:refinement}, a collection of polymorphic example values can quickly become the actual function definition itself!
This demonstrates the power of combining polymorphism with example-based reasoning, but it makes the story for a tool like \scythe less compelling at first glance.

This is certainly true when we are forced to give many examples in order to appease the synthesizer.
Early synthesizers such as \textsc{Myth} had the problematic requirement of \emph{trace completeness} that enough examples are provided so that the synthesizer can carry out execution of any recursive function call made during the synthesis process without having the entire function formed.

However, \scythe lifts this restriction by not immediately rejecting candidate programs that fail due to a lack of examples.
A detailed discussion of this feature is beyond the scope of this paper, but the importance of this fact is that \scythe can produce meaningful output with little-to-no examples!
As a result, polymorphic example values become much more appealing to use.
For example, \autoref{fig:scythe-ex-1} shows how the system only requires a single example to synthesize the \mintinline{haskell}{stutter} and \mintinline{haskell}{append} functions over lists.
Note how the examples (1) in contrast to \mintinline{haskell}{fromMaybe}, don't so obviously imply the implementation of the function and (2) resemble the examples that you might naturally write in documentation.

\begin{figure}
\begin{minted}{haskell}
-- | {@
--      stutter :: [a] -> [a]
--      stutter [a1, a2] = [a1, a1, a2, a2]
--      @@
--      recArg=0
--   @}
stutter :: [a] -> [a]
stutter l = _

> (ok) case l1 of
>       [] -> l1
>       (:) a1 l2 -> (:) a1 ((:) a1 (stutter l2))
\end{minted}

\begin{minted}{haskell}
-- | {@
--      append :: [a] -> [a] -> [a]
--      append [a1, a2] [a3, a4, a5] =
--        [a1, a2, a3, a4, a5]
--      @@
--      recArg=0
--   @}
append :: [a] -> [a] -> [a]
append l1 l2 = _

> (ok) case l1 of
>        [] -> l2
>        (:) a1 l3 -> (:) a1 (append l3 l2)
\end{minted}
  \caption{Example \scythe runs for \mintinline{haskell}{stutter} and \mintinline{haskell}{append}}
  \label{fig:scythe-ex-1}
\end{figure}

\paragraph{The Power of Free Theorems}

The guarantees of polymorphic types are well-known~\cite{wadler_theorems_1989} and are one of the primary motivations for our work in automating type-directed programming with synthesis techniques.
Parametricity tells us that the very shape of a polymorphic type determines very precisely the set of programs that inhabit that type.
We witness this in \scythe when we provide \emph{no} examples to the system and rely on raw-term enumeration.\footnote{%
  By default \scythe limits the depth of nested function application and cases to avoid exponential blow-up in the search space.  
}
For example, here is the output from \scythe when no examples are given for \mintinline{haskell}{fromMaybe}:
\begin{minted}{haskell}
> (?) case m1 of
>       Nothing -> s1
>       Just a1 -> a1
> (?) case m1 of
>       Nothing -> s1
>       Just a1 -> s1
> (?) s1
\end{minted}
The question marks in the output denote that while each reported program has the right type, they have not been found to be consistent with any examples (since we have not provided any!).
However, there are so few programs to report due to polymorphism, the user can simply review each implementation to see if it matches the behavior they want for the function.
This stands in contrast to the shallower search that GHC's hole-filling mechanism conducts which only suggests \mintinline{haskell}{s1} as a valid completion.\footnote{%
  As of GHC version 8.6.4.
}

It might be the case that some polymorphic function of interest is too complicated to be specified with examples.
This example shows that synthesis techniques like those used by \scythe still have the potential to provide meaningful feedback, even when only types drive the synthesis process.

\section{Further Extensions}
\label{sec:extensions}
A constraint-based approach to program synthesis opens the doors to a number of additional features that can greatly expand the scope and power of the system.
We briefly reflect on them, suggesting next steps forward based on the techniques developed in this paper.

\paragraph{Existential Types}

One glaring omission from the system described in \autoref{fig:synthesis-system} is the lack of existential types.
In Haskell, existential types are encoded as universal types in constructors that are not mentioned in the output type of the constructor.
The classic example of this are a type used to box elements of a heterogeneous list:
\begin{minted}{haskell}
data Obj where Obj :: Show a => a -> Obj
\end{minted}
From this, we can create our heterogeneous list from a regular list of \mintinline{haskell}{Obj}, \eg, \mintinline{haskell}{[Obj 1, Obj "hi", Obj True]}.
With this set up, we have no way of extracting the values from their \mintinline{Haskell}{Obj} wrappers---the \mintinline{Haskell}{Obj} effectively seals the type variable \mintinline{Haskell}{a} from the outside world---but the typeclass constraint allows us to traverse and \mintinline{Haskell}{show} the values.

When working with GADT types as in \textsc{refine-gadt-case} and \textsc{refine-gadt-data}, we must take care to identify which of the type variables are truly universal (that appear in the output type of the constructor) versus those that are existential (that do not appear in the output type of the constructor).
For example, if we were pattern matching on a value with type \mintinline{Haskell}{Obj}, its sole match would contain a binder for the existentially bound variable \mintinline{Haskell}{a}.
However, we must ensure that we do not unify that variable with a concrete type as its type is really sealed!
This can be accomplished by calculating via a free variable check which type variables are universal or existential and then skolemizing the existential variables so that they cannot participate in the unification process.

On top of this, we must also decide how to deal with examples at existential type.
Can we get away with representing existential types with polymorphic constants and relying on skolemization to assign those constants appropriate types during synthesis or do we need a new example form for existential values?

\paragraph{Typeclass Constraints}

GADTs are only one way that type constraints are introduced in a Haskell program.
More common are \emph{typeclass constraints}, for example, the standard association list \hask{lookup} function of type
\begin{haskellcode}
lookup :: Eq a => a -> [(a, b)] -> Maybe b
\end{haskellcode}
constrains its first argument of type \hask{a} to be an instance of the \hask{eq} typeclass which in turns provides the method \hask{(==)} to compare \hask{a}s with.
Typeclasses are ubiquitous in Haskell code, especially in highly polymorphic code that mixes typeclass constraints with multiparamteter typeclasses and functional dependencies~\cite{jones_functional_1995}.

While seemingly unrelated at first glance, Vytiniotis \etal\;demonstrate how the problems of typeclass constraints and GADTs inference can be solved through their constraint-based type inference system, \textsc{OutsideIn(X)} and implication constraints~\cite{vytiniotis_outsideinx_2011}.
In order to add typeclass constraints to our synthesis system, we will likely need to adapt more of their constraint representation and solving techniques.

\paragraph{Partial Type Information}

One of the goals of this paper is taking the first steps towards understanding how to marry together constraint-based type inference techniques with program synthesis.
However, oddly enough, even though we've adopted a constraint-based approach to synthesis in this work, we still have made the problems of type inference and program synthesis rather separate!
In particular, because we require that the top-level binding of our synthesis problem possesses a fully annotated type, we only need to perform type inference in a very limited setting---instantiating polymorphic types during term generation---which greatly simplified our approach.

A more ambitious marriage of constraint-based type inference and program synthesis would not only leverage constraint-based reasoning techniques in synthesis but truly intertwine type inference and program synthesis.
This would involve removing restrictions of having concrete types everywhere in the system (\eg, the scrutinee of \textsc{refine-case}) and reasoning more about how to incrementally refine goal types in tandem with program discovery.
Such a change would likely impact performance of the synthesizer significantly but give greater flexibility to the user in terms of specifying a program of interest when the types become complicated.

\section{Related Work}
\label{sec:related}
\paragraph{Program Synthesis with Types}

Program synthesis, the automatic generation of programs from specification, is an enormous field encompassing the programming languages, formal verification, software engineering, and most recently the machine learning communities~\cite{gulwani_program_2017}.
Many approaches are used to tackle this problem in a variety of contexts from general-purpose programming to highly-specific domains.  Utilizing types as a guiding principle for synthesis is a small, yet important, piece of the puzzle, especially as it pertains to richly-typed functional programming languages and type-directed programming.
These approaches span the simply-typed, general-purpose domain~\cite{augustsson_[haskell]_2004, osera_type-and-example-directed_2015}, components~\cite{katayama_analytical_2012, feser_synthesizing_2015, gvero_complete_2013, feng_component-based_2017}, and more richly-typed domains~\cite{frankle_example-directed_2016, polikarpova_program_2016}.
Many of these prior efforts handle polymorphism but in an ad hoc manner or using an ``enumerate all possible types'' approach.
The approach that we present in the paper is the first, to our knowledge, to attempt an ``infer while enumerate'' approach to type instantiation.

\paragraph{Type Inference and GADTs}

Our adaption of type inference is heavily rooted in the constraint-based approached proposed by Odersky \etal\;with their $HM(X)$ framework~\cite{odersky_type_1999} as well as the presentation of constraint-based typing found in \emph{TAPL}~\cite{pierce_types_2002}.

GADT type inference, while an undecidable problem~\cite{simonet_constraint-based_2007}, has enjoyed continuous refinement in order to develop inference algorithms that are tractable but also produce useful results~\cite{karachalias_gadts_2015, vytiniotis_outsideinx_2011, chen_principal_2016, jones_functional_1995, moreira_type_2018, simonet_constraint-based_2007, schrijvers_complete_2009}.
While our system does not perform GADT type inference directly, its approach to GADT type checking and constraint passing is inspired by Vytiniotis's \textsc{OutsideIn(X)} framework~\cite{vytiniotis_outsideinx_2011}.

\paragraph{Live Programming}

Our exploration of constraint-based, type-directed program synthesis lives in the context of live programming support for type-directed programming.
This exploration has begun to take shape in the PL community in several forms.
First is the support for language servers which provide efficient compilation services to live programming tools~\cite{bour_merlin:_2018}.
The second is support for hole-based development that was exclusively found in dependently-typed programming languages such as Coq, Agda, and Idris, but are now found in languages like Haskell~\cite{gissurarson_suggesting_2018}.
The third is theoretical explorations of hole-based, interactive development~\cite{omar_hazelnut:_2017, omar_live_2019}.

\begin{acks}
  We thank the TyDe 2019 reviewers for their valuable feedback in improving this manuscript and the students of the Grinnell Pioneer PL research group that have worked on \scythe in its various forms over the last few years: Bogdan Abaev, List Berkowitz, Kevin Connors, Shelby Frazier, Reily Grant, Andrew Mack, Griffin Mareske, Ankit Pandey, Dhruv Phumbhra, Jonathan Sadun, Zachary Segall, and Zachary Susag.

  This material is based upon work supported by the \grantsponsor{nsf}{National Science Foundation}{https://nsf.org} under Grant No.~\grantnum{nsf}{1651817}.
  Any opinions, findings, and conclusions or recommendations expressed in this material are those of the author and do not necessarily reflect the views of the National Science Foundation.
\end{acks}

\bibliography{main}

\iftoggle{TR}{%
  \appendix
  \section{The Formalism}  
  \label{app:formalism}
  \subsection{The System}

We give the complete syntax (\autoref{fig:app-syntax}) and semantics (both typechecking, \autoref{fig:app-typechecking}, and synthesis, \autoref{fig:app-synthesis}) of the constraint-based synthesis system presented in this work.
Auxiliary definitions that support these systems are found in \autoref{fig:app-synthesis-aux}.
This is the final version of the system as presented in \autoref{sec:gadts} which includes generalized algebraic datatypes.

\begin{figure*}[p]
  $\begin{tabu}{lcll}
  \sigma & \bnfdef & \forall \many{i}{a_i}.\,C \Rightarrow \tau & \text{Type Schemes} \\
  \tau & \bnfdef & \alpha \bnfalt a \bnfalt T\;\many{i}{\tau_i} \bnfalt \tau_1 \rightarrow \tau_2 & \text{Monotypes} \\
  \\
  e & \bnfdef & c \bnfalt x \bnfalt K \bnfalt \lambda x.\,e \bnfalt e_1\;e_2 \bnfalt \kwd{case}\;e\;\kwd{of}\;\many{j}{m_j} & \text{Expressions} \\
  m & \bnfdef & K\;\many{k}{x_k} \rightarrow e & \text{Match Branches} \\
  \chi & \bnfdef & c \bnfalt K\;\many{i}{\chi_i} \bnfalt v \Rightarrow \chi & \text{Examples} \\
  X & \bnfdef & \forall \many{i}{c_i{:}a_i}.\,\many{j}{\chi_j} & \text{Top-level Examples} \\
  v & \bnfdef & c \bnfalt \lambda x.\,e \bnfalt K\;\many{i}{v_i} & \text{Values} \\
  b & \bnfdef & x = e & \text{Top-level Bindings} \\
  \\
  C & \bnfdef & \{ \many{i}{\tau_i \sim \tau_i'} \} & \text{Constraints} \\
  S & \bnfdef & \many{i}{[v_i / x_i]} & \text{Environments} \\
  \theta & \bnfdef & \cdot \bnfalt [\tau / a] \theta \bnfalt [\tau / \alpha] \theta & \text{Type Environments/Unifers} \\
  W & \bnfdef & \many{i}{[S_j \mapsto \chi_j]} & \text{Example Worlds} \\
  \Gamma & \bnfdef & \cdot \bnfalt x{:}\sigma, \Gamma \bnfalt K{:}\forall \many{i}{a_i}.\,C \Rightarrow \many{k}{\tau_k} \rightarrow T\;\many{i}{a_i}, \Gamma \bnfalt c{:}a, \Gamma & \text{Contexts}
\end{tabu}$

  \caption{The complete system: syntax.}
  \label{fig:app-syntax}
\end{figure*}

\begin{figure*}[p]
  \input{figs/auxiliary.tex}
  \caption{The complete system: auxiliary synthesis definitions.}
  \label{fig:app-synthesis-aux}
\end{figure*}

\begin{figure*}[p]
  \input{figs/typechecking.tex}
  \caption{The complete system: typechecking.}
  \label{fig:app-typechecking}
\end{figure*}

\begin{figure*}[p]
  \input{figs/refinement-gadts.tex}
  \input{figs/generation.tex}
  \caption{The complete system: synthesis semantics.}
  \label{fig:app-synthesis}
\end{figure*}

\subsection{Metatheory}

As mentioned in \autoref{sec:gadts}, there are two main properties we might want to consider of our synthesis system:
\begin{itemize}
  \item \textbf{Soundness} states that synthesized programs are consistent with their specification.
  \item \textbf{Completeness} states that we are able to synthesize all programs that meet a particular specification.
\end{itemize}

\subsubsection{Soundness}

We prove that the system is sound with respect to the two kinds of specification under consideration: \emph{types} (\autoref{lem:generation-type-soundness} and \autoref{lem:refinement-type-soundness}) and \emph{examples} (\autoref{lem:example-soundness}).

First we introduce a number of auxiliary lemmas about constraints and generation necessary to prove generation type soundness.

\begin{lemma}[Constraint Monotonicity During Generation]
  If $C; \Gamma \vdash \tau \Rightarrow e; C'$ then $C \subseteq C'$.
\end{lemma}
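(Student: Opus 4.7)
The plan is to proceed by straightforward structural induction on the derivation of $C; \Gamma \vdash \tau \Rightarrow e; C'$. Since generation has only three rules (\textsc{gen-var}, \textsc{gen-app}, and \textsc{gen-unify}), the proof splits into three cases, each of which amounts to unfolding the definition of the rule and, where needed, appealing to the inductive hypothesis.

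For the base-like case \textsc{gen-var}, the output constraint set is defined explicitly as $C'' = C \cup \theta C' \cup \{\tau \sim \theta \tau'\}$, so $C \subseteq C''$ is immediate from the definition of set union. For \textsc{gen-unify}, the output $C'$ is produced by the single subderivation $C; \Gamma \vdash \tau_2 \Rightarrow e; C'$, and the inductive hypothesis yields $C \subseteq C'$ directly.

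The only case requiring a bit of chaining is \textsc{gen-app}: the rule's conclusion is $C; \Gamma \vdash \tau \Rightarrow e_1\,e_2; C_2$ with premises $C; \Gamma \vdash \alpha \to \tau \Rightarrow e_1; C_1$ and $C_1; \Gamma \vdash \alpha \Rightarrow e_2; C_2$. Applying the inductive hypothesis to the first premise gives $C \subseteq C_1$, and to the second gives $C_1 \subseteq C_2$; transitivity of $\subseteq$ then yields $C \subseteq C_2$, which is what we need.

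I do not anticipate any real obstacle. Monotonicity is essentially designed into the judgment: each rule either unions new constraints onto the existing set or threads the constraint set unchanged through a subderivation, so the invariant is preserved by construction. The lemma is useful mainly as a piece of plumbing for later results (e.g., showing that later generation steps can still rely on constraints accumulated earlier), rather than as a technically deep statement on its own.
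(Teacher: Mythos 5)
Your proof is correct and takes the same route as the paper's: a straightforward induction on the generation derivation, observing that each rule only unions new constraints onto or threads the input set. The paper states this in one line; your case analysis (union for \textsc{gen-var}, the inductive hypothesis for \textsc{gen-unify}, and transitivity of $\subseteq$ across the two premises of \textsc{gen-app}) simply makes the same argument explicit.
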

\begin{proof}
  By a straightforward induction on the generation derivation for $e$.
  Note that all of the generation rules only add to the input constraint set and never remove any constraints.
\end{proof}

\begin{lemma}[Constraint Weakening During Solving]
  If $C \vDash \tau \sim \tau'$, $C \subseteq C'$, $\mathsf{consistent}(C)$, and $\mathsf{consistent}(C')$, then $C' \vDash \tau \sim \tau'$.
\end{lemma}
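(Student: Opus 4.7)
The plan is to reduce the claim to a standard property of most general unifiers (MGUs): whenever one constraint set extends another, the MGU of the larger set factors through the MGU of the smaller. Since the relation $C \vDash \tau \sim \tau'$ is defined in terms of $\mathsf{unify}(C)$, this factorization lets us transport the equality from $C$ to $C'$.

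More concretely, let $\theta = \mathsf{unify}(C)$ and $\theta' = \mathsf{unify}(C')$, which exist by the two consistency hypotheses. First I would observe that $\theta'$ is itself a unifier (not necessarily most general) of $C$: because $C \subseteq C'$, every equation in $C$ is in $C'$, and $\theta'$ solves all of them. Now invoke the standard MGU factorization property for first-order unification: since $\theta$ is a most general unifier of $C$ and $\theta'$ is another unifier of $C$, there exists a substitution $\theta''$ such that $\theta' = \theta'' \circ \theta$ (interpreting substitutions outside their domain as the identity, so the composition makes sense on all variables appearing in $\tau$ and $\tau'$).

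With factorization in hand, the rest is immediate. By hypothesis $C \vDash \tau \sim \tau'$, which unfolds to $\theta \tau = \theta \tau'$. Apply $\theta''$ to both sides:
\[
  \theta'' (\theta \tau) \;=\; \theta'' (\theta \tau'),
\]
and rewrite each side using $\theta' = \theta'' \circ \theta$ to obtain $\theta' \tau = \theta' \tau'$. Unfolding the definition of $\vDash$ again yields $C' \vDash \tau \sim \tau'$, as required.

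The main obstacle is not the proof skeleton but justifying the MGU factorization cleanly within the paper's setting. The definitions in \autoref{fig:synthesis-aux} treat $\mathsf{unify}$ as a black box (``standard unification algorithm''), so the factorization has to be imported as a well-known meta-theoretic fact about syntactic first-order unification over the monotype grammar (type variables, unification variables, arrows, and saturated datatype constructors), all of whose type constructors are free. A secondary subtlety is making the factorization act correctly on unification variables occurring in $\tau, \tau'$ but not in $C$: extending $\theta$ and $\theta'$ to act as the identity on such variables makes $\theta' = \theta'' \circ \theta$ hold on the relevant variables, which is the only fact the rest of the argument uses.
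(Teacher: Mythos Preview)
Your proof is correct and is in fact more rigorous than the paper's. The paper argues by contradiction in a single sentence: it supposes $C'$ fails to entail the equality, asserts that some constraint in $C' \setminus C$ must therefore ``refute'' the equality, and claims such a constraint would make $C'$ inconsistent. That is really just a restatement of the underlying intuition (adding equations can only identify more types, never fewer) rather than a proof; the step from ``$\theta'\tau \neq \theta'\tau'$'' to ``some constraint renders $C'$ inconsistent'' is not justified. Your argument via the MGU factorization property---$\theta'$ unifies $C$ because $C \subseteq C'$, hence $\theta' = \theta'' \circ \theta$ for some $\theta''$, hence $\theta\tau = \theta\tau'$ implies $\theta'\tau = \theta'\tau'$---is the standard textbook way to make that intuition precise, and it yields a direct proof rather than an informal contradiction sketch. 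The subtlety you flag about variables of $\tau,\tau'$ not occurring in $C$ is handled exactly as you say (and indeed the usual formulation of the MGU property already gives $\theta' = \theta'' \circ \theta$ as total substitutions, so no extra work is needed).
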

\begin{proof}
  Suppose that $C'$ did not entail the desired equality.
  Because $C \subseteq C'$, then $C'$ must have some additional constraint that $C$ does not have that refutes the equality.
  However, this additional constraint would render $C'$ inconsistent, a contradiction.
\end{proof}

\begin{lemma}[Constraint Weakening During Typing]
  If $C; \Gamma \vdash e : \tau$, $C \subseteq C'$, $\mathsf{consistent}(C)$, and $\mathsf{consistent}(C')$, then $C'; \Gamma \vdash e : \tau$.
\end{lemma}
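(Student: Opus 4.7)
The plan is to prove this lemma by straightforward structural induction on the typing derivation $C; \Gamma \vdash e : \tau$, using the two previously established weakening lemmas (Constraint Weakening During Solving for entailments, and the trivial observation $C \subseteq C'$) to ``lift'' each constraint-checking premise from $C$ to $C'$, while applying the induction hypothesis to every typing sub-derivation.

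For the leaf rules, things are easy. In \textsc{type-var}, the only premise involving constraints is $C \vDash \many{i}{[\tau_i/a_i]} C''$; I would invoke Constraint Weakening During Solving on each equality in $\many{i}{[\tau_i/a_i]} C''$ to promote this entailment to $C'$, and rebuild the rule. In \textsc{type-const}, there are no constraint premises at all, so the conclusion is immediate. For the congruence-style rules \textsc{type-lam}, \textsc{type-app}, \textsc{type-case}, I would apply the induction hypothesis to each sub-derivation with the same $C$ and $C'$ and then reassemble the conclusion. For \textsc{type-unify}, I would first use Constraint Weakening During Solving to promote the entailment $C \vDash \{\tau \sim \tau'\}$ to $C'$, and then invoke the induction hypothesis on the sub-derivation.

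The genuinely delicate case, and the one I expect to be the main obstacle, is \textsc{type-match}. There the sub-derivation for the branch body is typed under the \emph{extended} constraint set $C \cup \theta C''$, where $C''$ is the constraint component of the constructor's type scheme. To apply the induction hypothesis I need not only $C \cup \theta C'' \subseteq C' \cup \theta C''$ (which is immediate from $C \subseteq C'$) but also $\mathsf{consistent}(C' \cup \theta C'')$. This does not follow purely from $\mathsf{consistent}(C')$ and $\mathsf{consistent}(C \cup \theta C'')$ in general, since a new constraint in $C' \setminus C$ could conflict with $\theta C''$. My intended route is to treat consistency of the combined set as a side obligation that must be discharged at the point where this lemma is applied in the soundness argument, or equivalently to strengthen the induction hypothesis so that it takes, as an extra hypothesis, consistency of every union $C' \cup D$ with $D$ ranging over the local constraint extensions introduced along the derivation.

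With that strengthening in hand, the \textsc{type-match} case becomes routine: invoke the induction hypothesis on the body with the extended sets, then rebuild the match-typing rule (the premise $K{:}\forall \many{i}{a_i}.\,C'' \Rightarrow \many{k}{\tau_k} \rightarrow T\;\many{i}{a_i} \in \Gamma$ and the substitution $\theta = \many{i}{[\tau_i/a_i]}$ are unchanged). In all cases the structural shape of the derivation is preserved and only the ambient constraint set moves from $C$ to $C'$, so no rule needs to be reassociated. Once the match case is handled, the induction goes through cleanly and yields $C'; \Gamma \vdash e : \tau$ as required.
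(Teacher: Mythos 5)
Your overall route is the same as the paper's: a structural induction on the typing derivation in which the only premises that touch the ambient constraint set are entailments, and those are lifted from $C$ to $C'$ via the Constraint Weakening During Solving lemma; the congruence cases just propagate the induction hypothesis. The paper's proof is essentially that one sentence and stops there. Where you go further is the \textsc{type-match} case, and your worry is legitimate: the branch body is typed under the locally extended set $C \cup \theta C''$, so the induction hypothesis demands $\mathsf{consistent}(C' \cup \theta C'')$ (and, for that matter, $\mathsf{consistent}(C \cup \theta C'')$, which the typing rules never check), and neither follows from the lemma's stated hypotheses --- a constraint in $C' \setminus C$ can clash with $\theta C''$. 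The paper silently elides this. In its defense, the lemma is only ever invoked (in the \textsc{gen-app} case of Generation Type Soundness) on terms produced by the generation judgment, which emits only variables and applications, so no \textsc{type-match} node ever occurs in the derivations it is actually applied to; but as a statement about arbitrary typing derivations it needs exactly the kind of side condition or strengthened induction hypothesis you propose. Same approach, with yours being the more careful reading of the match case.
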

\begin{proof}
  By a straightforward induction on the generation derivation for $e$.
  In particular, typing depends entirely on constraints only for solving ($C \vDash C'$), and the constraint weakening during solving lemma tells us that this is safe to do so under a weakened (extended) constraint set.
\end{proof}

With these lemmas in hand, we can prove generation type soundness.

\begin{theorem}[Generation Type Soundness]
  \label{lem:generation-type-soundness}
  If $C; \Gamma \vdash \tau \Rightarrow e;C'$ then $C'; \Gamma \vdash e : \tau$.
\end{theorem}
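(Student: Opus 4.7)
The plan is to proceed by induction on the derivation $C; \Gamma \vdash \tau \Rightarrow e; C'$, with one case per generation rule (\textsc{gen-var}, \textsc{gen-app}, \textsc{gen-unify}). Throughout, I will freely invoke the three auxiliary lemmas immediately preceding the theorem: constraint monotonicity during generation ($C \subseteq C'$), constraint weakening during solving, and constraint weakening during typing. The main work is to show that each generation rule can be matched by a corresponding short typing derivation under the enlarged output constraint set.

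For \textsc{gen-var}, suppose $x{:}\forall \many{i}{a_i}.\,C_0 \Rightarrow \tau' \in \Gamma$, $\theta = \many{i}{[\alpha_i/a_i]}$, and $C'' = C \cup \theta C_0 \cup \{\tau \sim \theta\tau'\}$. Since $\theta C_0 \subseteq C''$ we have $C'' \vDash \theta C_0$ trivially, so \textsc{type-var} gives $C''; \Gamma \vdash x : \theta\tau'$. Because $\{\tau \sim \theta\tau'\} \subseteq C''$, one application of \textsc{type-unify} yields $C''; \Gamma \vdash x : \tau$, which is exactly what we need. For \textsc{gen-app}, we have $C; \Gamma \vdash \alpha \rightarrow \tau \Rightarrow e_1; C_1$ and $C_1; \Gamma \vdash \alpha \Rightarrow e_2; C_2$. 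The induction hypothesis gives $C_1; \Gamma \vdash e_1 : \alpha \rightarrow \tau$ and $C_2; \Gamma \vdash e_2 : \alpha$. Monotonicity tells us $C_1 \subseteq C_2$, so constraint weakening during typing lifts the first judgment to $C_2; \Gamma \vdash e_1 : \alpha \rightarrow \tau$, and \textsc{type-app} then delivers $C_2; \Gamma \vdash e_1\,e_2 : \tau$. For \textsc{gen-unify}, we have $C \vDash \tau_1 \sim \tau_2$ and $C; \Gamma \vdash \tau_2 \Rightarrow e; C'$; by induction, $C'; \Gamma \vdash e : \tau_2$. Monotonicity gives $C \subseteq C'$, so constraint weakening during solving yields $C' \vDash \tau_1 \sim \tau_2$, and \textsc{type-unify} concludes $C'; \Gamma \vdash e : \tau_1$.

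The primary obstacle — and what the three preceding lemmas are tailored to address — is making sure the hypothesis sets under which subderivations are typed line up with the output set of the conclusion. In \textsc{gen-app} in particular, the recursive IH for $e_1$ produces only $C_1$, yet the conclusion must be stated at $C_2$; similarly in \textsc{gen-unify}, the solving judgment $C \vDash \tau_1 \sim \tau_2$ must be promoted to $C'$. Both moves depend on consistency of the larger constraint set, which is why I must be careful to note that consistency is guaranteed: \textsc{gen-var} explicitly enforces it on $C''$, and the outputs of the other rules inherit consistency because they are literally the outputs of recursive generation calls whose derivations ultimately bottom out in \textsc{gen-var}. Any subset of a consistent set is consistent, so the smaller constraint sets used in the subderivations are consistent as well, satisfying the hypotheses of the weakening lemmas.

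Once these three cases are dispatched, the theorem follows. No generalization of the statement is required, since the type $\tau$ is an input to the relation and is preserved on the nose by each rule (the shift between $\tau_1$ and $\tau_2$ in \textsc{gen-unify} is absorbed by the matching \textsc{type-unify} rule). The proof is essentially a rule-by-rule mirroring of generation into typing, and its correctness hinges on the constraint-set plumbing handled uniformly by the three weakening/monotonicity lemmas.
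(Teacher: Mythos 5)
Your proof is correct and follows essentially the same route as the paper's: induction on the generation derivation, with each of the three rules mirrored by \textsc{type-var}/\textsc{type-unify}, \textsc{type-app}, and \textsc{type-unify} respectively, and the constraint-set mismatches repaired by the monotonicity and weakening lemmas. Your explicit attention to discharging the consistency side-conditions of the weakening lemmas is a small point the paper glosses over, but it does not change the structure of the argument.
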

\begin{proof}
  By induction on the generation derivation of $e$.
  Term generation is derived from typing, so we expect the assumptions gained by analysis of the term generation rules to coincide with the proof obligations from the typing rules.
  What remains is ensuring that we manage our constraint sets so that they contain the necessary constraints to type our generated terms appropriately.
  \begin{proofcases}
    \proofcase{gen-var} By assumption, we know that $\tau$ generates $x$, $x{:}\forall \many{i}{a_i}.\,C' \Rightarrow \tau' \in \Gamma$, and $C'' = C \cup \theta C' \cup \{ \tau \sim \theta \tau' \}$ is consistent.
    We must show that $C''; \Gamma \vdash x : \tau$.
    By \textsc{type-unify}, it suffices to show that $C'' \vDash \{ \tau \sim \many{i}{[\alpha_i / a_i]} \tau' \}$ and $C''; \Gamma \vdash x : \many{i}{[\alpha_i / a_i]} \tau'$.
    We know that the former holds by assumption.
    The latter holds by \textsc{type-var} as long as $C'' \vDash \many{i}{[\alpha_i / a_i]}$ which is true since $C''$ is consistent and $\many{i}{[\alpha_i / a_i]}$ is one of its constraints.
    \proofcase{gen-app} By assumption, we know that we generate $e_1$ at type $\alpha \rightarrow \tau$ and $e_2$ at type $\alpha$.
      By our induction hypotheses, we know that $e_1$ has type $\alpha_1$ under constraints $C_1$ and $e_2$ has type $\alpha_2$ under constraints $C_2$.
      By constraint monotonicity, we know that $C \subseteq C_1 \subseteq C_2$.
      And by constraint weakening, since we know that $C_1$ and $C_2$ are consistent, $e_1$ also has type $\alpha_1$ under constraints $C_2$.
      Finally, we can conclude by \textsc{type-app} that $C_2; \Gamma \vdash e_1\;e_2 : \tau$.
    \proofcase{gen-unify} By assumption, we know that $\tau_2$ generates $e$ under constraints $C'$, $C \vDash \tau_1 \sim \tau_2$, and $\mathsf{consistent}(C')$.
      By \textsc{type-unify}, it suffices to show that $C'; \Gamma \vdash e; \tau_2$ and $C' \vDash \tau_1 \sim \tau_2$.
      The former is true directly from our induction hypothesis.
      The latter is true because constraint monotonicity tells us that $C \subseteq C'$ and constraint weakening says that since $C \vDash \tau_1 \sim \tau_2$, $C' \vDash \tau_1 \sim \tau_2$ as well.
  \end{proofcases}
\end{proof}

We can then tackle type soundness of the refinement portions of the system.

\begin{theorem}[Refinement Type Soundness]
  \label{lem:refinement-type-soundness}
  \begin{enumerate}
    \item If $C;\Gamma \vdash \tau \rhd W \Rightarrow e$ then $C;\Gamma \vdash e : \tau$.
    \item If $K{:}\forall \many{i}{a_i}.\,C' \Rightarrow \many{k}{\tau_k} \rightarrow T; \many{i}{a_i} \in \Gamma$ and $C; \Gamma \vdash e; \many{i}{\tau_i}; K; \tau \rhd W \Rightarrow m$ then $C; \Gamma \vdash \many{i}{\tau_i}; m : \tau$.
  \end{enumerate}
\end{theorem}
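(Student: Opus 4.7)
The plan is to prove both parts simultaneously by mutual structural induction on the refinement derivation (part 1) and the match-refinement derivation (part 2), since \textsc{refine-gadt-case} invokes the match judgment and \textsc{refine-gadt-match} recursively invokes the refinement judgment in its body. Generation Type Soundness (Theorem~\ref{lem:generation-type-soundness}) will be applied as a black box wherever the refinement system drops into the generation judgment, and the constraint-weakening lemmas established prior to generation soundness will be needed whenever the resulting constraint set from generation needs to be reconciled with the input $C$.

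For part (1), most cases are close to direct appeals to a matching typing rule. \textsc{refine-gadt-lam} feeds its IH into \textsc{type-lam}; \textsc{refine-gadt-unify} feeds its IH and its hypothesis $C \vDash a \sim \tau$ into \textsc{type-unify}; \textsc{refine-gadt-case} combines the IH from part (1) on the scrutinee's sub-refinement (actually from the embedded generation call, dispatched via Generation Type Soundness) with the IHs from part (2) on each match branch, plugged into \textsc{type-case}; and \textsc{refine-gadt-guess} simply invokes Generation Type Soundness on the generated term and is immediate. Part (2), the \textsc{refine-gadt-match} case, uses the IH on the body's refinement derivation (under the extended context $\many{k}{x_k{:}\theta\tau_k},\Gamma$ and the augmented constraint set $C \cup \theta C'$) and dispatches directly to \textsc{type-match}.

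The main obstacle is the \textsc{refine-gadt-data} case. Here I must exhibit a typing derivation for $K\;\many{k}{e_k}$ at type $T\;\many{i}{\tau_i}$ given that (i) $K : \forall \many{i}{a_i}.\,C' \Rightarrow \many{k}{\tau_k} \rightarrow T\;\many{i}{a_i}$ is in $\Gamma$, (ii) $C \vDash \theta C'$ where $\theta = [\many{i}{\tau_i/a_i}]$, and (iii) by the IHs each $e_k$ has type $\theta\tau_k$ under $C$. The plan is to first use \textsc{type-var} to type $K$ itself at the fully instantiated arrow type $\theta\tau_1 \rightarrow \cdots \rightarrow \theta\tau_k \rightarrow T\;\many{i}{\tau_i}$; the side condition $C \vDash \theta C'$ of \textsc{type-var} is exactly hypothesis (ii). Then a finite cascade of \textsc{type-app} applications, peeling off one argument at a time using the per-argument IHs, produces the desired derivation. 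The subtle point is ensuring that the constraint set stays fixed as $C$ throughout this cascade, rather than growing as it would in generation; since refinement does not thread constraint outputs, each per-argument IH already delivers its conclusion under the common $C$, so no weakening is needed in this case.

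A secondary worry is that in \textsc{refine-gadt-case}, the scrutinee is produced via the generation judgment, so its typing will be witnessed under the strictly larger constraint set $C'$ output by generation. To conclude $C;\Gamma \vdash \kwd{case}\;e\;\kwd{of}\;\many{j}{m_j} : \tau$, I will need to either argue that the case rule tolerates this (via \textsc{type-unify} on $C'$, noting that in \textsc{refine-gadt-case} the output constraint set is simply discarded) or strengthen the theorem statement to carry the generated constraint set in the conclusion for that case. I expect the cleanest resolution is to observe that the extra constraints in $C'$ cannot be used anywhere in the match branches (whose IHs only give us typings under $C$), so the final judgment genuinely should be stated over $C'$ in that sub-case; the theorem as stated then requires the invariant that $C'$ was already consistent and that we implicitly weaken via the lemmas preceding Generation Type Soundness.
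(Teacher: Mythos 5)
Your proposal matches the paper's proof in its essentials: both proceed by mutual induction on the two refinement judgments, dispatch each refinement rule to its corresponding typing rule, and invoke Generation Type Soundness (together with the constraint monotonicity and weakening lemmas) wherever refinement drops into term generation. If anything you are more careful than the paper in two spots: for \textsc{refine-gadt-data} you construct the derivation explicitly from \textsc{type-var} followed by a cascade of \textsc{type-app} (the paper appeals to a \textsc{type-data} rule that does not appear in its typing figure), and you explicitly flag the mismatch between the scrutinee's output constraint set $C'$ and the branches' constraint set $C$ in \textsc{refine-gadt-case}, an issue the paper only addresses, informally, in its \textsc{refine-gadt-guess} case by arguing the extra generated constraints can be discarded.
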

\begin{proof}
  By mutual induction on the two synthesis derivations.
  Like generation, refinement is derived from typing, so we expect the assumptions gained by the refinement rules to coincide with the proof obligations from the typing rules.
  \begin{proofcases}
    \proofcase{refine-gadt-lam} By assumption, we refine $e$ at type $\tau_2$ and by our induction hypothesis, $e$ is well-typed at $\tau_2$.
      Thus by \textsc{type-lam}, $\lambda x.\,e$ is well-typed at $\tau_1 \rightarrow \tau_2$.
      \proofcase{refine-gadt-data} By assumption, we refine each of the $k$ arguments of $K$, $\many{k}{e_k}$, at their respective types $\many{k}{\tau_k}$.
      By our induction hypotheses, each $\many{k}{e_k}$ is well-typed at $\many{k}{\tau_k}$.
      Thus, by \textsc{type-data}, $K\,\many{k}{e_k}$ is well-typed at $T\;\many{i}{\tau_i}$.
    \proofcase{refine-gadt-case}
      By assumption, we know that $T\;\many{i}{\tau_i}$ generates $e$ under constraints $C'$ and for each of the $j$ constructors $K_j$ of $T$, we refine an appropriate match $m_j$ at type $\tau$.
      By our induction hypotheses, we know that the body of each $m_j$ has type $\tau$, and by \textsc{type-case} we can conclude that the entire $\kwd{case}$ expression has type $\tau$.
    \proofcase{refine-gadt-guess}
      By assumption, we know that $C; \Gamma \vdash \tau \Rightarrow e; C'$ and by refinement type soundness for generation, we know that $e$ has type $\tau$ under assumptions $C'$.
      By constraint monotonicity, $C \subseteq C'$.
      By inspecting our generation rules, we note that the only additional constraints in $C'$ are added in \textsc{gen-var}.
      There are two kinds of constraints added:
      \begin{enumerate}
        \item $\theta C$, the constraint that our overall constraint set satisfies the constraints introduced by the variable.
          $e$ must satisfy this constraint---otherwise we would have encountered an inconsistent constraint set during generation and rejected $e$---and so we can safely remove this constraint from $C'$.
        \item $\tau \sim \theta \tau'$, the constraint that our goal type is equivalent to some instantiation of the variable's polymorphic type.
          $e$'s type by definition does not contain any unification constraints---these are only created during term generation---and so these constraints do not can also be safely removed.
      \end{enumerate}
      Since both kinds of constraints can be safely removed from $C'$, we can conclude that $e$ is also well-typed under $C$.
    \proofcase{refine-gadt-unify}
      By assumption, we refine $e$ at type $\tau$ and $C \cup \{ \tau \sim \alpha \}$ so by \textsc{type-unify}, we can conclude that $e$ is well-typed at type $a$.
    \proofcase{refine-gadt-match}
    By assumption, we know that $\tau$ generates the match $K\;\many{k}{x_k} \rightarrow e'$ under constraints $C \cup \theta C'$ and context $\many{k}{x_k{:}\theta \tau_k}, \Gamma$.
    By our inductive hypothesis, the branch body $e'$ has type $\tau$ and by \textsc{type-match}, the overall match has type $\tau$ as well.
  \end{proofcases}
\end{proof}

And finally we can use \autoref{lem:generation-type-soundness} and \autoref{lem:refinement-type-soundness} to prove the soundness of our top-level synthesis judgment for bindings.

\begin{theorem}[Top-level Binding Type Soundness]
  If $\Gamma \vdash \sigma \rhd X \Rightarrow b$ then $\Gamma \vdash b : \sigma$.
\end{theorem}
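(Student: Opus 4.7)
The plan is to proceed by inversion on the single rule that can derive $\Gamma \vdash \sigma \rhd X \Rightarrow b$, namely \textsc{refine-gadt-binding}, and then chain the already-proved refinement type soundness result (\autoref{lem:refinement-type-soundness}) with the typing rule \textsc{type-binding}. Concretely, assume $\Gamma \vdash \forall \many{i}{a_i}.\,\tau \rhd \forall \many{k}{c_k{:}a_k}.\,\many{j}{\chi_j} \Rightarrow x = e$. Inversion on \textsc{refine-gadt-binding} supplies (i) $\mathrm{ftvs}(\tau) \subseteq \many{i}{a_i}$, (ii) freshness of $x$, and (iii) the sub-derivation $\cdot; \many{k}{c_k{:}a_k}, \Gamma \vdash \tau \rhd \{\many{j}{\cdot \mapsto \chi_j}\} \Rightarrow e$.

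First, I would apply \autoref{lem:refinement-type-soundness} to (iii), obtaining $\cdot; \many{k}{c_k{:}a_k}, \Gamma \vdash e : \tau$. To finish with \textsc{type-binding}, whose premise reads $\cdot;\Gamma \vdash e : \tau$ together with $\mathrm{ftvs}(\tau) \subseteq \many{i}{a_i}$, I need to peel the polymorphic-constant bindings $\many{k}{c_k{:}a_k}$ back off the context. This is the step I expect to be the main obstacle: the constants are introduced only to give the example values $\chi_j$ a type, but generation (via \textsc{gen-var}) is in principle free to pick them when building $e$. So the step is not pure weakening --- it requires a side condition that $e$ is free of $c_k$, or an argument that no well-typed synthesized expression at type $\tau$ (whose free type variables lie among $\many{i}{a_i}$) can mention a constant at an as-yet-unquantified type variable $a_k$.

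The cleanest way to dispatch this obligation is to prove a small auxiliary lemma stating that if $C; c_k{:}a_k, \Gamma \vdash e : \tau$ and $c_k \notin \mathrm{fv}(e)$, then $C; \Gamma \vdash e : \tau$, by a routine induction on the typing derivation. The non-trivial content is then to establish $c_k \notin \mathrm{fv}(e)$, which I would either impose as an explicit side condition on \textsc{refine-gadt-binding} (consistent with the intended reading that polymorphic constants are a specification-time fiction) or prove by strengthening \autoref{lem:generation-type-soundness} and \autoref{lem:refinement-type-soundness} to track the invariant that generated terms never reference variables whose type involves an existentially-introduced type variable outside its scope.

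Once the constant-stripping step is established, applying \textsc{type-binding} with premise $\cdot;\Gamma \vdash e : \tau$ and the free-type-variable side condition (inherited unchanged from inversion) yields $\Gamma \vdash x = e : \forall \many{i}{a_i}.\,\tau$, which is exactly the desired conclusion $\Gamma \vdash b : \sigma$.
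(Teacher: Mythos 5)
Your proposal follows the same route as the paper's own proof: invert \textsc{refine-gadt-binding}, apply refinement type soundness to the premise, and close with \textsc{type-binding}. The difference is that you explicitly flag the context mismatch --- refinement type soundness yields $\cdot; \many{k}{c_k{:}a_k}, \Gamma \vdash e : \tau$, whereas \textsc{type-binding} demands $\cdot; \Gamma \vdash e : \tau$ --- and the paper's proof silently elides this step, passing directly from ``$e$ is well-typed at $\tau$'' to the conclusion. Your observation is correct that this is not mere weakening: the grammar admits constants as expressions, \textsc{type-const} types them, and nothing in \textsc{gen-var} or the guess rule obviously prevents the synthesizer from emitting a bare $c_k$ at goal type $a_k$ (which could even pass the example check, since constants are values). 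So the auxiliary strengthening you propose --- either a side condition $c_k \notin \mathrm{fv}(e)$ on \textsc{refine-gadt-binding}, or an invariant threaded through generation that synthesized terms never mention polymorphic constants --- is genuinely needed to make the argument airtight, and your proof is in that respect more careful than the one in the paper.
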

\begin{proof}
  By induction on the refinement derivation of $b$.
  There is only one case, \textsc{refine-gadt-binding}.
  By assumption, we know that we refine $\sigma \rhd W$ to $x = e$ under the context $\Gamma$ extended with bindings for the polymorphic constants introduced in $W$ and $\mathrm{ftvs}(\tau) \subseteq \many{i}{a_i}$.
  By refinement type soundness, we know that $e$ is well-typed at some type $\tau$ and thus by \textsc{type-binding}, we know that the overall binding has the desired polymorphic type.
\end{proof}

Examples are only used during refinement, so we need only consider soundness of the refinement portion of the system.

\begin{theorem}[Example Soundness]
  \label{lem:example-soundness}
  If $\Gamma \vdash W : \tau$ and $\cdot;\Gamma \vdash \tau \rhd W \Rightarrow e$ then for all $S \mapsto \chi \in W$, $Se \equiv_\beta \chi$.
\end{theorem}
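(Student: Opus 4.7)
The plan is to prove this by induction on the refinement derivation $\cdot;\Gamma \vdash \tau \rhd W \Rightarrow e$, mutually with a corresponding lemma for the match synthesis judgment. For each case, I would fix an arbitrary world $S \mapsto \chi \in W$ and show that $Se \equiv_\beta \chi$, where the shape of $\chi$ is dictated by the shape of $e$ (and hence the rule applied).

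The refinement cases are mostly mechanical. For \textsc{refine-gadt-guess}, the conclusion is immediate from the rule's premise $\many{j}{S_j e \equiv_\beta \chi_j}$. For \textsc{refine-gadt-data}, the world has the form $S \mapsto K\;\many{k}{\chi_k}$, and the induction hypothesis on each sub-refinement gives $Se_k \equiv_\beta \chi_k$; since evaluation commutes with the constructor head, $S(K\;\many{k}{e_k}) \longrightarrow^* K\;\many{k}{\chi_k}$. For \textsc{refine-gadt-lam}, the world has the form $S \mapsto v \Rightarrow \chi'$ and we augment the environment with $[v/x]$; the IH gives $([v/x]S)e \equiv_\beta \chi'$, and by the defining equation for $\equiv_\beta$ at input-output form we have $S(\lambda x.\,e) \equiv_\beta v \Rightarrow \chi'$ iff $(S(\lambda x.\,e))\,v \equiv_\beta \chi'$, which follows by one $\beta$-step plus the IH. The \textsc{refine-gadt-unify} case is transparent since refinement of examples only inspects the shape of $\chi$, not $\tau$.

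The interesting case is \textsc{refine-gadt-case}, and this is where I expect the main obstacle. Here $e = \kwd{case}\;e_s\;\kwd{of}\;\many{j}{m_j}$, and the world environment $S$ picks out exactly one constructor $K_j$ via the filtering relation $W|^{K_j}_{e_s}$, that is, $S e_s \longrightarrow^* K_j\;\many{k}{v_k}$ for some values $v_k$. I would use the mutual IH on the corresponding match-synthesis derivation $\cdot;\Gamma \vdash e_s; \many{i}{\tau_i}; K_j; \tau \rhd W \Rightarrow m_j$: by \textsc{refine-gadt-match}, the body $e'$ of $m_j$ was refined under a world whose environment extends $S$ with $\many{k}{[v_k/x_k]}$, and the IH yields $(\many{k}{[v_k/x_k]}S)e' \equiv_\beta \chi$. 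Combining this with one evaluation step through the case expression (which selects branch $j$ because $S e_s$ reduces to $K_j$ applied to the $v_k$'s) yields $Se \equiv_\beta \chi$. The delicate aspects are (i) confirming that the worlds partition correctly so that every $S \mapsto \chi \in W$ falls into exactly one $W|^{K_j}_{e_s}$, which requires $S e_s$ to converge to a constructor form of the expected datatype—this is where the typing hypothesis $\Gamma \vdash W : \tau$, combined with Type Soundness for the scrutinee generation and a canonical forms argument, becomes essential—and (ii) ensuring that the values $v_k$ bound in \textsc{refine-gadt-match} are precisely the values produced by the evaluation of $Se_s$ in the original world, which follows from determinism of evaluation.

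To make the induction go through cleanly, I would strengthen the statement to cover both mutually defined refinement judgments simultaneously, and I would carry along the hypothesis $\Gamma \vdash W : \tau$ throughout, re-establishing it for each sub-world by appeal to the example-typing rules of \autoref{fig:app-typechecking} (this is needed to invoke canonical forms in the case analysis above). Polymorphic-constant values behave like opaque constants under $\equiv_\beta$, so they cause no trouble in the base cases; and the match-synthesis case \textsc{refine-gadt-match} is handled as sketched inside the case analysis of \textsc{refine-gadt-case}, forming the mutual step.
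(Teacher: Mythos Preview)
Your proposal is correct and follows essentially the same approach as the paper's proof: induction on the refinement derivation, with the \textsc{refine-gadt-case}/\textsc{refine-gadt-match} pair handled together and the other cases dispatched exactly as you describe. If anything, you are slightly more explicit than the paper about the auxiliary facts needed in the case analysis (termination of the scrutinee via canonical forms, determinism of evaluation, and the need to re-establish well-typedness of sub-worlds), which the paper either draws directly from the rule premises or leaves implicit.
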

\begin{proof}
  By induction on the refinement derivation of $e$.
  At a high level, we know refinement is sound because example refinement coincides with a single step of evaluation based on the shape of the expression being refined.
  \begin{proofcases}
    \proofcase{refine-gadt-lam}
      By assumption we know $\lambda x.\,e$ refines at $\tau_1 \rightarrow \tau_2 \rhd \many{j}{\{S_j \mapsto v_j \Rightarrow \chi_j\}}$, $\many{j}{\{S_j \mapsto v_j \Rightarrow \chi_j\}}$ is well-typed at $e$, and $e$ refines at $\tau_2 \rhd \many{j}{\{[v_j / x] S_j \mapsto \chi_j\}}$.
      By \textsc{type-ex-io}, we know that $v$ has type $\tau_1$ and $\chi$ has type $\tau_2$ and thus, we can apply our induction hypothesis to conclude that $e$ satisfies $\many{j}{\{[v_j / x] S_j \mapsto \chi_j\}}$.
      We must show that $\forall j.\,S_j (\lambda x.\,e) \equiv_\beta v_j \Rightarrow \chi_j$; consider a single such example world $S \mapsto v \Rightarrow \chi$.
      By definition $S (\lambda x.\,e) \equiv_\beta v_j \Rightarrow \chi_j$ if $S (\lambda x.\,e) v_j \rightarrow ([v_j /x] S) e\equiv_\beta \chi_j$.
      But we know this because $e$ satisfies $[v / x] S \mapsto \chi$ from our induction hypothesis.
    \proofcase{refine-gadt-data}
      By assumption we know $K\;\many{k}{e_k}$ refines at $T\;\many{i}{\tau_i} \rhd \many{j}{\{ S_j \mapsto K\;\many{k}{\chi^j_k} \}}$, each $e_k$ refines at $\theta \tau_k \rhd \{ S_j \mapsto \many{j}{\chi^j_k} \}$, and the example worlds are well-typed at type $T\;\many{i}{\tau_i}$.
      By \textsc{type-var} and an application of \textsc{type-app} for each of the $k$ arguments of $K$, we know that each $\chi^j_k$ has type $\theta \tau_k$.
      We must show that $\forall j.\,S_j (K\;\many{k}{e_k}) \equiv_\beta K\;\many{k}{\chi_k}$; consider one such example world $S \mapsto K\;\many{k}{\chi_k}$.
      We know that $S_j (K\;\many{k}{e_k}) = K\;\many{k}{S_j e_k}$ so it suffices to show that $\many{k}{e_k} \equiv_\beta \chi_k$.
      However, we know this is true by applying our inductive hypothesis to each $e_k$.
    \proofcase{refine-gadt-case}
    By assumption we know we refine $\kwd{case}\;e\;\kwd{of}\many{j}{m_j}$ at $\tau \rhd W$, we generate $e$ at type $T\;\many{i}{\tau_i}$, and that we refine each of the case matches at $\tau \rhd W$.
      Consider one of these matches $K\;\many{k}{x_k} \rightarrow e'$.
      By \textsc{refine-gadt-match}, we refine the body of this match $e'$ at type $\tau \rhd \many{j}{\{\many{k}{[v^j_k / x_k]} S_j \mapsto \chi_j \}}$ where $W|^K_e = \many{j}{\{S_j \mapsto \chi_j\}}$.
      And by \textsc{type-case} and \textsc{type-match}, we know that $e'$ has type $\tau$.

      Overall, we must show that $\forall j.\,S_j (\kwd{case}\;e\;\kwd{of}\many{j}{m_j}) \equiv_\beta \chi_j$.
      For our chosen match branch with constructor $K$, consider only example worlds $W|^K_e$ where the case expression evaluates to this branch.
      Consider one such example world $S \mapsto \chi \in W|^K_e$.
      By assumption $S e \longrightarrow^* K\;\many{k}{v_k}$.
      And so it is sufficient to show that $(\many{k}{v_k / x_k} S) e' \equiv_\beta \chi$ but we know this by our induction hypothesis.

      Finally note that by construction, we include one match for each possible constructor of datatype $T$.
      Thus, all example worlds are sent to exactly one of the branches of the $\kwd{case}$.
      Therefore, we can conclude that our $\kwd{case}$ expression satisfies the example worlds.
    \proofcase{refine-gadt-guess}
      By assumption, we already know that $S_je \equiv_\beta \chi_j$ for all $j$ example worlds of $W$.
    \proofcase{refine-gadt-unify}
      By assumption, we know that both $a$ and $\tau$ refine to $e$ and by our induction hypothesis, we know that $e$ satisfies $W$.
  \end{proofcases}
\end{proof}

Example soundness of the top-level judgment is an immediate consequence of example soundness of the refinement judgment.

\begin{theorem}[Top-level Example Soundness]
  If $\Gamma \vdash X : \sigma$ and $\Gamma \vdash \sigma \rhd \forall \many{k}{c_k{:}a_k}.\,\many{j}{\chi_j} \Rightarrow x = e$ then $\forall j.\,e \equiv_\beta \chi_j$.
\end{theorem}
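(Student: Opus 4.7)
The plan is to proceed by straightforward inversion on both hypotheses and then invoke the refinement-level Example Soundness lemma (\autoref{lem:example-soundness}) directly. There is essentially one rule applicable at the top level on either side, so the proof reduces to assembling the premises of that lemma from what inversion gives us.

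First I would invert the refinement derivation $\Gamma \vdash \sigma \rhd \forall \many{k}{c_k{:}a_k}.\,\many{j}{\chi_j} \Rightarrow x = e$. The only applicable rule is \textsc{refine-gadt-binding}, which forces $\sigma = \forall \many{i}{a_i}.\,\tau$ with $\mathrm{ftvs}(\tau) \subseteq \many{i}{a_i}$, and yields the sub-derivation
\[
  \cdot ; \many{k}{c_k{:}a_k}, \Gamma \vdash \tau \rhd \{ \many{j}{\cdot \mapsto \chi_j} \} \Rightarrow e.
\]
Next I would invert $\Gamma \vdash X : \sigma$. The only applicable rule is \textsc{type-ex-poly}, which tells us $\many{k}{a_k} \subseteq \many{i}{a_i}$ and that for every $j$, $\many{k}{c_k{:}a_k}, \Gamma \vdash \chi_j : \tau$. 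This is precisely the statement that the example worlds $\{ \many{j}{\cdot \mapsto \chi_j} \}$ are well-typed at $\tau$ under the extended context $\many{k}{c_k{:}a_k}, \Gamma$, which is the typing hypothesis required by \autoref{lem:example-soundness}.

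Now I would apply \autoref{lem:example-soundness} to the refinement derivation together with the well-typedness of the worlds. The conclusion is that for each $S \mapsto \chi \in \{ \many{j}{\cdot \mapsto \chi_j} \}$, we have $Se \equiv_\beta \chi$. Since every environment $S$ in the world set is the empty substitution $\cdot$, we have $Se = e$, and so $e \equiv_\beta \chi_j$ for every $j$, as required.

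There is no real obstacle here; the statement truly is an immediate corollary. The only subtlety worth stating explicitly is that the top-level refinement rule threads empty environments into the per-example worlds, which is what allows us to strip the substitution and conclude $e \equiv_\beta \chi_j$ rather than $Se \equiv_\beta \chi_j$ for some nontrivial $S$. If one wanted to be careful, one could also note that the polymorphic constants $c_k$ introduced by \textsc{refine-gadt-binding} and \textsc{type-ex-poly} are treated uniformly as free variables in the context, so no additional handling is needed for them beyond what Example Soundness already provides.
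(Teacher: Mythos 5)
Your proposal is correct and follows essentially the same route as the paper: invert \textsc{refine-gadt-binding} and \textsc{type-ex-poly}, then discharge the goal via the refinement-level Example Soundness lemma, noting that the environments are empty so $Se = e$. The paper phrases the final step as an appeal to the induction hypothesis rather than a direct invocation of the lemma, but the substance is identical.
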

\begin{proof}
  By induction on the synthesis derivation for the binding $x = e$.
  The sole rule is \textsc{refine-gadt-binding} which means that we assume that
  $x = e$ is refined by $\forall \many{i}{a_i}.\,\tau \rhd \forall \many{k}{c_k{:}a_k}.\,\many{j}{\chi_j}$, the body $e$ is refined by $\tau \rhd \{ \many{j}{\cdot \mapsto \chi_j} \}$, and the top-level examples have type $\forall \many{i}{a_i}.\,\tau$.
  By \textsc{type-ex-poly}, we know that each $\chi_j$ has type $\tau$ and so our induction hypothesis tells us that $e$ satisfies each $\chi_j$ immediately.
\end{proof}

\subsubsection{Completeness}

\begin{theorem}[Refinement Completeness]
  If $\cdot;\Gamma \vdash e : \tau$ and $\Gamma \vdash W : \tau$ then $\cdot;\Gamma \vdash \tau \rhd W \Rightarrow e$.
\end{theorem}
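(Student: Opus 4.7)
The plan is to proceed by induction on the structure of $e$ (equivalently, on the typing derivation modulo the \textsc{type-unify} weakening rule, which can be absorbed into the other cases). For each syntactic form of $e$, I identify the corresponding refinement rule and either recurse via the inductive hypothesis or fall through to generation.

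For the introduction forms the plan is direct. In the lambda case $e = \lambda x.\,e'$ at type $\tau_1 \rightarrow \tau_2$, inversion on $\Gamma \vdash W : \tau_1 \rightarrow \tau_2$ via \textsc{type-ex-io} forces every example in $W$ to be an input-output pair $v_j \Rightarrow \chi_j$, so \textsc{refine-gadt-lam} applies and the inductive hypothesis discharges the body under $x{:}\tau_1,\Gamma$. The constructor case $e = K\,\many{k}{e_k}$ at type $T\,\many{i}{\tau_i}$ splits: if every example shares the head constructor $K$, apply \textsc{refine-gadt-data} and recurse on each argument using the decomposed example worlds; otherwise defer to \textsc{refine-gadt-guess}. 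The $\kwd{case}$-expression case uses \textsc{refine-gadt-case}, which forces a simultaneous induction with a match-completeness statement analogous to the one used in the refinement-soundness proof.

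For the remaining elimination forms (bare variables and applications), I route through \textsc{refine-gadt-guess} and establish a companion generation-completeness lemma: if $C;\Gamma \vdash e : \tau$ where $e$ is built only from variable and application forms, then $C;\Gamma \vdash \tau \Rightarrow e; C'$ for some consistent $C'$. This auxiliary lemma proceeds by induction on $e$, choosing the fresh unification variables introduced by \textsc{gen-var} and \textsc{gen-app} so that they match the actual polymorphic instantiations and intermediate function types witnessed by the original typing derivation. Consistency of $C'$ follows by exhibiting those chosen types as an explicit unifier; monotonicity and weakening of constraints (as already proved for soundness) then let me glue the sub-derivations together.

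The hard part, which the paper itself flags just before the theorem, is that \textsc{gen-app} forbids $\kwd{case}$ expressions in argument positions: if $e = e_1\;e_2$ and $e_2$ contains a $\kwd{case}$ subterm, neither refinement nor generation can produce $e$ verbatim, and the induction breaks at exactly this case. The most honest resolution is to weaken the conclusion to ``$e'$ for some $e' \equiv_\beta e$ obtained from $e$ by bubbling nested $\kwd{case}$ expressions up past their enclosing applications''; the inductive proof then goes through because the bubbled form is still well-typed and satisfies the same examples. A second gap I would patch before running the induction is that the statement as given lacks the premise that $e$ itself satisfies $W$ (i.e.\ $S_je \equiv_\beta \chi_j$ for every $S_j \mapsto \chi_j \in W$); without it the side-condition of \textsc{refine-gadt-guess} cannot be discharged. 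I would add this premise and carry it through each inductive step, relying on the fact that example decomposition in \textsc{refine-gadt-lam}, \textsc{refine-gadt-data}, \textsc{refine-gadt-case}, and \textsc{refine-gadt-match} is the inverse of a single step of evaluation, so satisfaction is preserved when descending into sub-expressions.
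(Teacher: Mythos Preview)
The paper does not prove this theorem; it explicitly states that completeness \emph{fails} for the system. Immediately after the statement in the appendix the paper writes: ``Completeness does not hold for the system because we cannot synthesize all programs.'' So there is no proof in the paper for you to match---the theorem is stated only to be refuted.

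Your proposal correctly identifies two genuine obstructions: the missing example-satisfaction premise (without $\forall j.\,S_j e \equiv_\beta \chi_j$ you cannot discharge \textsc{refine-gadt-guess}) and the inability of \textsc{gen-app} to place $\kwd{case}$ expressions in argument position. Your suggested repair---bubble nested $\kwd{case}$s outward and weaken the conclusion to a $\beta$-equivalent $e'$---is exactly the workaround the paper sketches for that particular obstacle.

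However, you miss the second obstruction the paper names: \textsc{gen-app} also cannot place \emph{lambdas} in argument position, and this one is not repairable by bubbling. A term like $\mathit{map}\;(\lambda x.\,x+1)\;l$ is already $\beta$-normal, so there is no $\beta$-equivalent term built solely from variables, applications, top-level lambdas, constructors, and top-level $\kwd{case}$s that the system can produce. Your generation-completeness lemma (``if $e$ is built only from variable and application forms \ldots'') is fine as far as it goes, but the induction on the full refinement judgment breaks precisely when an application argument is a lambda: neither your \textsc{refine-gadt-guess} route nor any refinement rule covers it, and your $\beta$-equivalence weakening does not help. The paper's position is that one would have to \emph{extend} the generation judgment with lambda and $\kwd{case}$ rules before a completeness argument could even be attempted; as the system stands, the theorem is simply false.
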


Completeness does not hold for the system because we cannot synthesize all programs.
Like its predecessors~\cite{osera_type-and-example-directed_2015, frankle_example-directed_2016}, the system searches for programs already in $\beta$-normal, $\eta$-long form, avoiding consideration of obviously redundant programs such as $(\lambda x.\,x)\;\mathsf{True} \longrightarrow^* \mathsf{True}$.
However, as discussed in \autoref{sec:gadts}, the system also does not allow for synthesis of expressions that involve \kwd{case}s or lambdas in argument positions of function applications.
In the case of \kwd{case}-expressions, nested \kwd{case}s can be bubbled up to enclose the expressions that they were originally nested.
However, with lambdas, we might not be able to synthesize some meaningful programs, \eg, an application of \hask{map} to an increment function (in a system extended with support for integers): \hask{map (\x -> x + 1) l}.

This reflects the implementation of the system in \scythe which also does not allow for generation of lambdas in argument positions.
Such generation is far less constrained than variable and function application generation and thus significantly degrades performance.
However, the restriction is merely practical.
One could add in term generation for lambdas and \kwd{case}-expressions which would resemble the refinement judgment defined in \autoref{fig:app-synthesis} but simply omits the portions of judgments and premises related to examples.
The existing proofs for soundness should be easily extended with these new rules and then completeness could be analyzed similarly to the completeness of $\lambda_{\mathsf{syn}}$, a synthesis system for a core lambda calculus~\cite{osera_program_2015}.

}{%
% No appendix for camera-ready paper submission
}

\end{document}